\newcommand{\real}{\mbox{${\mathbb R}$}}
\newtheorem{conj}[theorem]{Conjecture}
\newtheorem{obs}[theorem]{Observation}
\newtheorem{lem}[theorem]{Lemma}
\renewenvironment{proof}[1][]{\begin{trivlist}
\item[\hspace{\labelsep}{\bf\noindent Proof#1:\/}] }{\qed\end{trivlist}}
\begin{document}

\pagestyle{headings}
\title{Helly-Type Theorems in Property Testing}
\author{ Sourav Chakraborty\inst{1} \and
         Rameshwar Pratap\inst{1} \and
         Sasanka Roy\inst{1} \and
         Shubhangi Saraf\inst{2}
       }

\authorrunning{Chakraborty, Pratap, Roy, Saraf}

\tocauthor{
Sourav Chakraborty (CMI, Chennai, India),
Rameshwar Pratap (CMI, Chennai, India),
Sasanka Roy (CMI, Chennai, India) and
Shubhangi Saraf (Rutgers University, USA)
}

\institute{
Chennai Mathematical Institute,\\
Chennai, India.\\
e-mail:\{sourav,rameshwar,sasanka\}@cmi.ac.in
\and
Department of Mathematics and Department of Computer Science,\\
Rutgers University.\\
e-mail:{shubhangi.saraf}@rutgers.edu
}

\maketitle

\begin{abstract}
Helly's theorem is a fundamental result in discrete geometry, describing the ways in which convex 
sets  intersect with each other. If $S$ is a set of $n$ points in $\real^d$, we say that $S$ 
is $(k,G)$-clusterable if it can be partitioned into $k$ clusters (subsets)  such that each cluster 
can be contained in a translated copy of a  geometric object $G$. In this paper, as an application of Helly's 
theorem, by taking a constant size sample from $S$, we present a testing algorithm for 
$(k,G)$-clustering, \textit{i.e.}, to distinguish between two cases: when $S$ is $(k,G)$-clusterable,
and when it is $\epsilon$-far from being $(k,G)$-clusterable. A set $S$ is $\epsilon$-far 
$(0<\epsilon\leq1)$ from being $(k,G)$-clusterable if at least $\epsilon n$ points need to be 
removed from $S$ to make it $(k,G)$-clusterable. We  solve this problem for $k=1$ and 
 when $G$ is a symmetric convex object. For $k>1$, we  solve a \textit{weaker} version of 
 this problem. Finally, as an application of our testing result, in clustering with outliers, 
 we show that one can find the \textit{approximate} clusters  by querying a constant 
 size sample, with high probability.
\end{abstract}


\section{Introduction}
Given a set of $n$ points in $\real^d$,  deciding whether all the
points can be contained in a unit radius ball is a well known problem
in Computational Geometry.   Of course, the goal is to solve this
problem as quickly as possible.  In order to solve this problem exactly, 
one has to look at all the $n$ points in the worst case scenario.  
But if $n$ is too large, an algorithm with linear running time
may not be fast enough. 
Thus, one may be interested in ``solving'' the
above problem  by taking a very small size sample and outputting the
``right answer'' with high probability. In this paper, we consider 
the \textit{promise} version of this problem.  
More precisely, for the given \textit{proximity parameter} $\epsilon~(\mbox{where}~0<\epsilon\leq1)$, our 
goal is to distinguish between the following two cases: 
\begin{itemize}
 \item  all the points can be contained in a unit radius ball,
 \item no unit radius ball can contain more than  $(1 - \epsilon)$ fraction of  points. 
\end{itemize}
The above \textit{promise} problem falls in the realm of property
testing (see~\cite{goldreich98}, ~\cite{goldreich97} and ~\cite{ron08}). 
In property testing, the goal is to  look at a very small fraction of the input and 
decide whether the input satisfies the property or is
``far'' from satisfying it. Property testing algorithms for computational
geometric problems have been studied earlier in \cite{czumaj1}, \cite{czumaj2} and
 \cite{alon}.   In this paper, we study the above problem in property testing setting and give a 
 simple algorithm to solve it.  The algorithm queries  only a constant number of 
 points (where the constant depends on the dimension $d$ and $\epsilon$, but is independent of $n$) and correctly distinguishes
between the  two cases mentioned above with probability at least $2/3$. While the algorithm
is very simple, the proof of correctness is a little involved, for which we use  Helly's theorem.
Helly's theorem (\cite{helly}) states that  if a family of convex sets in $\real^d$
has a non-empty intersection  for every  $d+1$  sets, then the whole family has a non-empty
intersection. In fact, since Helly's theorem also works for symmetric convex bodies,
we can solve the above problem for any symmetric convex body instead of just a
unit radius ball. Thus, we have 

 \begin{theorem}\label{thm:main}
Let $A$ be a symmetric convex body.  If $S$ is a set of $n$ points in
$\real^d$ as input with the proximity 
 parameter $\epsilon~ (\mbox{where~} 0<\epsilon\leq1)$, then there is an algorithm $\mathcal{A}$ that randomly
samples $O(\frac{d}{\epsilon^{d+1}})$ many points and \begin{itemize}
\item $\mathcal{A}$ accepts, if all the points in $S$ can be contained in a 
      translated copy of $A$,
\item $\mathcal{A}$  rejects with probability $\geq 2/3$, if any translated 
      copy of $A$ can contain at most $(1-\epsilon)n$  points.
\end{itemize}
The running time of $\mathcal{A}$ is $O(\frac{d}{\epsilon^{d+1}})$. 
\end{theorem}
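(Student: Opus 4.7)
The plan is to propose a sampler-then-tester: draw $m=\Theta(d/\epsilon^{d+1})$ points uniformly from $S$ and accept iff those $m$ points fit in a translate of $A$. Testing whether a finite set fits in a translate of $A$ is a linear-program feasibility problem over the $d$-dimensional space of translation vectors, so the check runs in $\mathrm{poly}(m,d)$ time, giving the claimed running time. Completeness is immediate: if $S\subseteq c+A$ then the same $c$ works for any subsample, so the algorithm always accepts.

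For soundness I would pass through a Helly-style reformulation. Since $A=-A$, the condition $p\in c+A$ is equivalent to $c\in p+A$, so a finite set $T\subset\mathbb{R}^{d}$ fits in some translate of $A$ iff the convex family $\{p+A:p\in T\}$ has non-empty common intersection in the "center space" $\mathbb{R}^{d}$. Applying Helly's theorem to this family, $T$ fits iff every $(d+1)$-element subset of $T$ fits. Specializing to the sample $R$, the algorithm rejects iff $R$ contains some $(d+1)$-subset that does \emph{not} fit (call such a subset \emph{bad}). So the task reduces to showing that when $S$ is $\epsilon$-far, a uniformly random $R$ of size $m$ contains a bad $(d+1)$-subset with probability at least $2/3$.

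For the probabilistic estimate I would sample sequentially and track the shrinking convex set $C_i:=\bigcap_{j\le i}(p_j+A)\subseteq\mathbb{R}^{d}$, which is non-empty exactly when the first $i$ samples fit in a common translate. The $\epsilon$-far hypothesis, rephrased in center-space, says that every $c\in\mathbb{R}^{d}$ lies in at most $(1-\epsilon)n$ of the sets $\{p+A:p\in S\}$. Hence for any fixed $c\in C_{i-1}$, a fresh uniform draw $p_i$ "ejects" $c$ from $C_i$ with probability at least $\epsilon$. Since $C_i$ in $\mathbb{R}^{d}$ always has its emptiness/non-emptiness certified by at most $d+1$ of the constraints $p_j+A$ (again Helly), combining the per-step ejection estimate across these $d+1$ Helly-relevant constraints should force $C_m=\emptyset$ with probability at least $2/3$ once $m=\Theta(d/\epsilon^{d+1})$, equivalently producing a bad $(d+1)$-subset inside the sample.

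The principal obstacle is this last counting step: carrying out the "collapse $C_i$" argument while avoiding a $\log n$ loss. A direct union bound over the $\binom{n}{d+1}$ tight translates that could witness $R$ fitting would make the sample size depend on $n$, violating the constant-size goal; the argument must stay intrinsic to the sample, using only the iterated Helly-plus-ejection structure. A subsidiary, routine point is to verify that the LP-feasibility check "does $R$ fit in a translate of $A$?" really runs in $\mathrm{poly}(m,d)$ time even when $A$ is a general symmetric convex body presented by a membership/separation oracle.
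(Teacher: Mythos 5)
Your completeness argument and the center-space reformulation via $A=-A$ are exactly right, and the high-level plan (sample $O(d/\epsilon^{d+1})$ points, accept iff they fit in one translate of $A$) matches the structure of the paper's tester up to a reshuffle: the paper draws $\frac{1}{\epsilon^{d+1}}\ln\frac{1}{\delta}$ \emph{independent} $(d+1)$-tuples and rejects if any one of them fails to fit, which has the same $O(d/\epsilon^{d+1})$ sample complexity when $\delta$ is a constant. The two sampling schemes are interchangeable here because, by Helly applied to the family $\{p+A\}$, your single sample $R$ fits iff it contains no bad $(d+1)$-subset.

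The gap you flag at the end, however, is genuine, and it is precisely where the paper's proof does the work you are missing. Your ``sequential ejection'' observation only shows that any \emph{fixed} $c\in C_{i-1}$ is ejected from $C_i$ with probability $\geq\epsilon$; it does not control the entire continuum $C_{i-1}$, and there is no obvious way to iterate it into a bound forcing $C_m=\emptyset$ without either a union bound over $\binom{n}{d+1}$ tight certificates (giving the $n$-dependence you rightly want to avoid) or some extra combinatorial input. The paper supplies that input in the form of the \emph{fractional Helly theorem} (Theorem~\ref{theorem:fractHelly}, Katchalski--Liu) with the optimal bound $\beta(d,\alpha)=1-(1-\alpha)^{1/(d+1)}$ due to Kalai and Eckhoff. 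Transported via Observation~\ref{obs:SCB} to the relation ``does this $(d+1)$-tuple fit in a translate of $A$?'', this yields Lemma~\ref{lem:symConvFracHelly}: if at least $\epsilon n$ points cannot be covered by any translate of $A$, then at least an $\epsilon^{d+1}$ fraction of all $\binom{n}{d+1}$ subsets of size $d+1$ are bad. With that density statement in hand, soundness is a one-line calculation: draw $\frac{1}{\epsilon^{d+1}}\ln\frac{1}{\delta}$ independent $(d+1)$-tuples, and the probability that all of them fit is at most $\bigl(1-\epsilon^{d+1}\bigr)^{\frac{1}{\epsilon^{d+1}}\ln\frac{1}{\delta}}\leq\delta$. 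In short, the idea you are missing is to replace the per-step ejection bookkeeping with a single invocation of fractional Helly that lower-bounds the global density of bad $(d+1)$-subsets; everything else in your sketch then goes through.
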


One would  like to  generalize the above problem for more than one
object, \textit{i.e.},  given $k$ translated  copies of object
$B$, the goal is to distinguish between the following two cases with high
probability: 
\begin{itemize}
\item all $n$ points can be contained in $k$  translated copies of $B$, 
\item at least $\epsilon$ fraction of points cannot be contained in any $k$ translated copies of $B$.

\end{itemize}

We would like to conjecture that a similar algorithm, as stated in
Theorem~\ref{thm:main}, would also work for the generalized $k$ object
problem.  Unfortunately,  Helly's theorem does not hold for the $k$
object setting, but we would like to conjecture that  a version
of the Helly-type theorem  does hold for this setting.
Assuming the above conjecture, we can obtain a similar algorithm for
the $k$ object setting. We can also unconditionally solve a \textit{weaker} 
version of the $k$ object problem. 

\

\textbf{Connection to Clustering:}
We can also view this problem in the context of
clustering. Clustering
(\cite{cluster1},\cite{cluster2}, \cite{cluster3})  
is a common problem that arises in the analysis of large data sets. In a typical 
clustering problem, we have a set of $n$ input points in $d$ dimensional space and our goal is to 
partition the points into $k$ clusters. 
There are two ways to define the cluster size (cost): 
\begin{itemize}
 \item the maximum pairwise distance between an arbitrary pair of points in the cluster,
 \item twice the maximum distance between a  point and a chosen centroid.
\end{itemize}

The first one is called as $k$-center clustering 
for diameter cost and the second one is called as $k$-center clustering for radius cost.
In the $k$-center problem, our goal is to minimize the maximum of these distances. Computing 
$k$-center clustering is NP-hard: 
even for $2$ clusters in general Euclidean space (of dimension $d$); and also for 
general number of $k$ clusters  even on a plane.

In this paper,  we assume that the cluster can be of symmetric convex shape also. Given a set $S$ of $n$ points and  
a symmetric convex body $A$ in $\real^d$, we say that the set of points is $(k,A)$-clusterable if 
all the points can be contained in $k$ translated copies of $A$. In the $\textit{promise}$ version
of the problem, for a given proximity parameter $\epsilon~ (\mbox{where~} 0<\epsilon\leq1)$, our goal is to distinguish 
between the cases 
when  $S$ is $(k,A)$-clusterable and when it is $\epsilon$-far from
being $(k,A)$-clusterable. We say that $S$ is $\epsilon$-far from
being  $(k,A)$-clusterable if at least $\epsilon n$ points need to be removed from $S$ in order 
to make it  $(k,A)$-clusterable.

We solve the above problem for $k=1$ with constant number of queries. For $k>1$, we  solve
a \textit{weaker} version of the problem. In order to solve the \textit{promise} version of 
the problem, we have designed a 
randomized algorithm which is generally called as \textit{tester}. 

Our algorithms can also be used to find an \textit{approximately good} clustering. In 
clustering with outliers (anomalies), when we have the ability to ignore
some points as outliers, we present a randomized  
algorithm that takes a constant size sample from input and outputs  radii and centers 
of the  clusters. The benefit of our algorithm is that we construct an \textit{approximate}
representation of such clustering in  time which is independent of the input size.  

The most interesting part of our result is that we initiate
application of Helly-type theorem in property testing in order to
solve the clustering problem.

\subsection{Other related work}
Alon \textit{et al.} \cite{alon} presented testing algorithm for $(k,b)$-\textit{clustering}. A set of points is said to be $(k,b)$-\textit{clusterable} if it can be partitioned into 
$k$ \textit{clusters},  where radius (or diameter) of every cluster is at most $b$. Section $5$ 
of~\cite{alon} presents a testing algorithm for radius cost  under the $L_2$ metric. 
The analysis of this algorithm can be easily generalized to any metric under which each cluster 
is determined by a \textit{simple} convex set (a convex set in
$\real^d$ is called \textit{simple} if its VC-dimension is $O(d)$).   

For testing $1$-center clustering, our result and the result from \cite{alon} give
constant query testing algorithms. Although the two results have
incomparable query complexity (in terms of number of queries depending on  $\epsilon$), 
for testing $k$-center clustering, we give a \textit{weaker}  
query complexity algorithm which works for fixed $k$ and $d$, and for
$\epsilon\in(\epsilon',1]$ where 
$\epsilon'=\epsilon'(k,t)$ (where $t$ is a constant which depends on the shape of the geometric object).
 Alon \textit{et al.} used the sophisticated VC-dimension technique
 while we have used  Helly-type results.

\subsection{Organization of the paper}
In Section~\ref{section:prelim}, we introduce the notations, definitions and state Helly 
and \textit{Helly-type} theorems that are used in this paper. In Section~\ref{section:1piercing}, 
we design the \textit{tester} for $(1,A)$-cluster testing for a given symmetric convex body 
$A$. In Section~\ref{section:kpiercing}, we design the \textit{tester}
 for $(k,G)$-cluster testing for a given geometric object $G$. 
In Section~\ref{section:outliers}, as an application of results from
Sections~\ref{section:1piercing} and ~\ref{section:kpiercing},  we
present an algorithm to find \textit{approximate} clusters with
outliers.

\section{Preliminaries}\label{section:prelim}

\subsection{Definitions}\label{subsection:def}

\noindent \textbf{n-piercing:} \emph{A family of sets is called $n$-pierceable if there exists 
a set $S$ of $n$ points such that each member of the family has a non-empty intersection 
with $S$.}

\noindent \textbf{Homotheticity:} \emph{Let $A$ and $B$ be two geometric bodies in $\real^d$.
$A$ is homothetic to $B$ if there exist  $v\in\real^d$ and  $\lambda >0$ such that 
$A = v +\lambda B $ (where $\lambda$ is called  scaling factor of $B$). 
 In particular, when $\lambda=1$, $A$ is said to be a \textbf{translated copy} of $B$.}

\noindent \textbf{Symmetric convex body:} \emph{A convex body $A$ is called symmetric if it 
is centrally symmetric with respect to the origin, \textit{i.e.}, a point $v\in \real^d$ lies in $A$ if 
and only if its reflection through the origin $-v$ also lies in $A$.
In other words, for every pair of points 
$v_1,v_2\in \real^d$, if  $v_1\in v_2+A$, then $v_2\in v_1+A$ and vice versa.
Circles, ellipses,  $n$-gons (for even $n$) with parallel opposite sides  are examples 
of symmetric convex bodies.}

\subsection{Property Testing}

In property testing, the goal is to query a very small fraction of the input and decide 
whether the input satisfies a certain predetermined property or  is ``far" from satisfying it. Let $x=\{0,1\}^n$ be 
 a given input string. Then, a property testing algorithm, with query complexity $q(|x|)$ and proximity parameter $\epsilon$ for a 
decision problem $L$, is a randomized algorithm that 
makes at 
most $q(|x|)$ queries to $x$ and distinguishes between the following two cases:
\begin{itemize}
 \item if $x$ is in $L$, then the algorithm \textit{Accepts} $x$ with probability at least $\frac{2}{3}$,
 \item if x is $\epsilon$-far from $L$, then the algorithm \textit{Rejects} $x$ with probability at 
 least $\frac{2}{3}$.
\end{itemize}
Here, ``$x$ is $\epsilon$-far from $L$" means that the Hamming distance between $x$ and any 
string in $L$ is at least $\epsilon|x|$.
A property testing algorithm is said to have \textit{one-sided error} if it satisfies the 
stronger condition that the accepting probability for instances $x \in L$ is $1$ instead of $\frac{2}{3}$.

\subsection{Helly's and Fractional Helly's Theorem}
In 1913, Eduard Helly proved the following theorem:
\begin{theorem}\label{theorem:helly}
(Helly's Theorem~\cite{helly}) Given a finite family of convex sets $C_1,C_2,...,C_n$ in $\real^d$ (where $n\geq d+1$) such that if intersection of every 
$d+1$ of  these sets is non-empty, then the whole collection has a non-empty intersection.
\end{theorem}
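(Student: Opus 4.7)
The plan is to prove Helly's theorem by induction on $n$, the number of convex sets, with the base case $n = d+1$ handed to us for free by the hypothesis. The engine driving the induction will be Radon's theorem: any set of $d+2$ points in $\real^d$ admits a partition into two subsets whose convex hulls intersect. I would state and prove Radon's theorem as a preliminary lemma, since it is the real workhorse; the rest of the argument is a clean combinatorial assembly.

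For the inductive step, assume the result for any family of $n-1 \geq d+1$ convex sets and consider $C_1, \ldots, C_n$ satisfying the hypothesis. First I would observe that each sub-family obtained by deleting one set $C_i$ also satisfies the $(d+1)$-wise intersection hypothesis, so by induction there exists a point
\[
p_i \in \bigcap_{j \neq i} C_j.
\]
This produces $n \geq d+2$ witness points $p_1, \ldots, p_n$ in $\real^d$.

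Next I would apply Radon's theorem to $d+2$ of these witnesses (say the first $d+2$, after relabelling) to obtain a partition into index sets $I$ and $J$ with a common point $p \in \mathrm{conv}(\{p_i : i \in I\}) \cap \mathrm{conv}(\{p_j : j \in J\})$. The key observation is that for every index $i \in I$, the points $\{p_j : j \in J\}$ all lie in $C_i$ (because each $p_j$ with $j \neq i$ lies in $C_i$ by construction), and since $C_i$ is convex, $p \in \mathrm{conv}(\{p_j : j \in J\}) \subseteq C_i$. Symmetrically $p \in C_j$ for every $j \in J$, so $p$ lies in all of $C_1, \ldots, C_{d+2}$. To finish, I would note that the remaining sets $C_{d+3}, \ldots, C_n$ can be handled either by absorbing one at a time into the argument or, more cleanly, by running the induction directly: the inductive hypothesis applied to $n-1$ sets combined with the Radon trick above yields a point common to all $n$.

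The main obstacle is really Radon's theorem, which I would prove by a dimension-counting argument: given $p_1, \ldots, p_{d+2} \in \real^d$, the system of linear equations $\sum \lambda_i p_i = 0$, $\sum \lambda_i = 0$ consists of $d+1$ equations in $d+2$ unknowns and therefore admits a nontrivial solution; splitting the indices according to the sign of $\lambda_i$ and normalising gives the desired Radon partition. Once Radon's theorem is in hand, the Helly proof itself is short and the only delicate point is checking carefully that every set $C_i$ ends up containing the point $p$, which hinges on the fact that $p_i$ was chosen to lie in every $C_j$ with $j \neq i$.
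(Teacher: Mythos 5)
The paper does not prove Helly's theorem; it is stated as a classical result with a citation to Helly's 1913 paper, so there is no in-paper proof to compare against. Your argument --- induction on $n$ with Radon's theorem as the engine --- is the standard textbook proof, and it is correct. One small remark: you do not actually need the ``absorb the remaining sets $C_{d+3},\ldots,C_n$ one at a time'' fallback. After choosing $p_i \in \bigcap_{j\neq i} C_j$ for $i=1,\ldots,d+2$ (each intersection taken over all $n$ indices except $i$) and applying Radon to $p_1,\ldots,p_{d+2}$ to get the partition $I\cup J$ and the common point $p$, you already have $p\in C_i$ for every $i\in\{1,\ldots,n\}$: if $i\in I$ then $p\in\mathrm{conv}(\{p_j:j\in J\})\subseteq C_i$; if $i\in J$ then symmetrically; and if $i>d+2$ then $i$ avoids \emph{both} $I$ and $J$, so $p\in\mathrm{conv}(\{p_j:j\in I\})\subseteq C_i$. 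So the single Radon step closes the induction for all $n$ sets at once. Your dimension-counting proof of Radon's theorem ($d+1$ homogeneous linear constraints on $d+2$ unknowns, then split by sign and normalise) is also correct.
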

Katchalski and Liu proved the following result which can be viewed as a fractional version of the
Helly's Theorem.
\begin{theorem}\label{theorem:fractHelly}
 (Fractional Helly's Theorem~\cite{fracHelly}) For every $\alpha~(\mbox{where}~0<\alpha\leq 1)$, there exists $\beta = \beta(d, \alpha)$ with the following property. 
 Let $C_1 , C_2 , ..., C_n$ be convex sets in $\real^d$ (where $n\geq d+1$) and if at least $\alpha {n \choose d+1}$ 
 of the collection of subfamilies of size $d + 1$ has a non-empty intersection, then there exists a point contained in at least $\beta n$ sets. 
\end{theorem}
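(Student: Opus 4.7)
The plan is to reduce to classical Helly's theorem by extracting a large subfamily of convex sets in which every $(d+1)$-subfamily still intersects. Specifically, form the $(d+1)$-uniform hypergraph $H$ on $[n]$ whose edges are the indices of the intersecting $(d+1)$-subfamilies of $\mathcal{F}=\{C_1,\dots,C_n\}$, so the hypothesis reads $|E(H)|\ge\alpha\binom{n}{d+1}$. If one can locate a subset $T\subseteq[n]$ with $|T|\ge\beta n$ such that every $(d+1)$-subset of $T$ is an edge of $H$, then Theorem~\ref{theorem:helly} applied to $\{C_i:i\in T\}$ produces a common point in $\bigcap_{i\in T}C_i$, and that point lies in at least $\beta n$ sets, as required.

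I would proceed by induction on the dimension $d$. The base case $d=1$ reduces to a counting argument about intervals: sort them by right endpoint $b_1\le\cdots\le b_n$ and observe that an intersecting pair $(I_i,I_j)$ with $i<j$ has both $I_i$ and $I_j$ containing the witness $b_i$. Averaging over the $\alpha\binom{n}{2}$ intersecting pairs yields some $b_i$ contained in $\Omega(\alpha n)$ intervals, delivering a witness point directly (and bypassing the clique extraction outlined above). For the inductive step $d-1\mapsto d$, I would project $\mathcal{F}$ along a generic direction onto a hyperplane $h\cong\real^{d-1}$: every intersecting $(d+1)$-subfamily in $\real^d$ induces, by restriction, intersecting $d$-subfamilies of $\pi(\mathcal{F})$ in $\real^{d-1}$, so with the density suitably adjusted the inductive hypothesis produces a witness point $\bar p\in h$ contained in $\beta' n$ of the $\pi(C_i)$. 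Pulling back, $\beta' n$ of the $C_i$ meet the fiber $\ell=\pi^{-1}(\bar p)$, and applying the $d=1$ case to the intervals $\{C_i\cap\ell\}$ extracts a point on $\ell$ lying in $\beta n$ of the original sets.

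The main obstacle is quantitative: each averaging step in the induction loses a factor depending on $\alpha$ and $d$, so one must carefully control the density across all $d$ levels to keep $\beta=\beta(d,\alpha)$ positive and independent of $n$; a naive estimate gives only $\beta n = \Omega(\sqrt{n})$ rather than linear size. The crucial geometric ingredient is that the nerve of a convex family in $\real^d$ has dimension at most $d$ (a consequence of Helly), which caps the number of witness points any intersecting tuple can be assigned to and prevents the bound from collapsing to sub-linear in $n$. Careful bookkeeping of these density losses, following Katchalski and Liu, should recover the explicit constant $\beta(d,\alpha)=1-(1-\alpha)^{1/(d+1)}$.
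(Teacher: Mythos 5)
The paper does not prove this theorem: it is stated as a cited result of Katchalski and Liu, with the sharp constant $\beta(d,\alpha)=1-(1-\alpha)^{1/(d+1)}$ attributed to Kalai and to Eckhoff, and it is then used as a black box in Lemma~\ref{lem:symConvFracHelly}. So there is no in-paper proof to compare against, but your sketch has a genuine gap.

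The base case $d=1$ is sound (charging each intersecting pair to the smaller right endpoint gives a point in $\Omega(\alpha n)$ intervals). The problem is the inductive step. After projecting along a generic direction and applying the inductive hypothesis in $\real^{d-1}$, you obtain a point $\bar p$ lying in $\beta' n$ of the projections $\pi(C_i)$, hence $\beta' n$ of the $C_i$ meet the line $\ell=\pi^{-1}(\bar p)$. You then propose to apply the $d=1$ case to the intervals $C_i\cap\ell$. But the one-dimensional fractional Helly needs its own hypothesis: a positive fraction of the \emph{pairs} of these intervals must intersect, and you have no control over that. Knowing each $C_i$ meets $\ell$ says nothing about how the slices sit along $\ell$; they could be pairwise disjoint, and the original density hypothesis on $(d+1)$-tuples in $\real^d$ does not transfer to a pairwise density on a single fiber. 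The appeal to the nerve having dimension at most $d$ is too vague to close this. The standard argument avoids projection entirely: assign to each intersecting $(d+1)$-tuple a canonical witness point in its intersection (say the lexicographic minimum, after perturbing to general position), observe via Theorem~\ref{theorem:helly} applied to $d+2$ sets that this witness is already determined by some $d$-subtuple, and then pigeonhole over the $\binom{n}{d}$ choices of $d$-subtuple to find one whose witness lies in many of the $C_j$. If you want to rescue the induction-on-dimension route, the missing ingredient is a lemma showing that the projection and fiber-restriction steps preserve a fractional pairwise-intersection hypothesis for the slices $C_i\cap\ell$, and that is precisely what the present sketch does not supply.
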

 Independently, Kalai \cite{UB1} and Eckhoff \cite{UB2} proved that $\beta(d, \alpha)=1-(1-\alpha)^\frac{1}{(d+1)}$. 
A short proof for this upper bound can be found in \cite{shortUB}.

\subsection{\textit{Helly-type} theorem for more than one piercing in convex bodies}
Helly's theorem on intersections of convex sets focuses on  $1$-pierceable families. 
Danzer \textit{et al.} \cite{Danzer82} investigated  the following Helly-type problem :
If $d$ and $m$ are positive integers, what is the least $h = h (d, m)$ such that a family of 
boxes (with parallel edges) in $\real^d$ is $m$-pierceable if each of its $h$-membered 
subfamilies
is $m$-pierceable? Following is the main result of their paper:

\begin{theorem}\label{theorem:danzer}
 \begin{enumerate}
  \item $h(d,1)= 2$ \,\ for all $d~(\mbox{where~}d\geq1)$;
  \item $h(1,m)= m+1$	for all $m$;
  \item $h(d,2)= \begin{cases}
                   3d  \mbox{ for odd } d;\\
                   3d-1 \mbox{ for even } d;
                  \end{cases}$

  \item $h(2,3)= 16$;
  \item $h(d,m)= \infty$ for $d \geq 2, n \geq 3$ and $(d,m)\neq (2,3)$.
 \end{enumerate}
\end{theorem}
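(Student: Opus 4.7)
The theorem splits into five quantitatively different claims; I would attack them in order of difficulty, reusing Helly (Theorem~\ref{theorem:helly}) for the easy cases and relying on specialized combinatorial arguments for the harder ones.

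Part $(1)$ reduces cleanly to one-dimensional Helly. Write each axis-parallel box as a product of $d$ intervals; two such boxes meet if and only if their projections meet in every coordinate. If every two boxes intersect, then in each coordinate axis the projected family of intervals is pairwise intersecting, so by Helly in $\real^1$ each projected family has a common point, and the Cartesian product of these $d$ points pierces the original family. The lower bound $h(d,1)\ge 2$ is immediate from two disjoint boxes. For part $(2)$, I would use the classical greedy/disjointness argument on intervals sorted by right endpoint: place a piercing point at the current leftmost right endpoint, discard all intervals containing it, and iterate. If this procedure requires more than $m$ points, the selected intervals are pairwise disjoint, so they form a sub-family of $m+1$ intervals that is itself not $m$-pierceable, contradicting the hypothesis; the lower bound comes from $m+1$ pairwise disjoint intervals.

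Parts $(3)$ and $(4)$ are substantially harder and would follow the case analysis of~\cite{Danzer82}. For $(3)$ the general plan is to consider a minimal counterexample, fix the two hypothetical piercing points, and analyse coordinate-by-coordinate how the boxes can fail to be pierced; the odd/even distinction in the bound $h(d,2)$ arises because the extremal constructions use roughly three ``layers'' of boxes per coordinate and these layer interactions behave differently in odd versus even dimension. Matching lower bounds are obtained from explicit stacked-box arrangements along each axis. For $(4)$, the extremal number $16$ requires a careful case split on the configuration of three candidate piercing points relative to a minimal counterexample in $\real^2$, together with an explicit $16$-box family showing tightness.

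For $(5)$, for each $d \ge 2$ and $m \ge 3$ with $(d,m) \ne (2,3)$, I would construct arbitrarily large families of boxes such that every finite sub-family is $m$-pierceable but the whole family is not; standard recipes use grid- or staircase-like arrangements in which local piercing can be performed ad hoc but a global incidence argument forces $m+1$ piercing points. The main obstacles are clearly parts $(3)$ and $(4)$: the parity phenomenon and the specific constant $16$ emerge from delicate configuration-counting arguments that go beyond what Helly-type machinery alone provides, and a faithful reconstruction would have to follow the detailed combinatorial analysis in the cited paper rather than a short Helly-style reduction.
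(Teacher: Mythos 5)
The paper does not supply a proof of this theorem at all: it is stated verbatim as the main result of Danzer and Gr\"unbaum~\cite{Danzer82}, imported for later use in Section~\ref{section:kpiercing}, and no argument is given or expected within this paper. So there is no ``paper's own proof'' for your attempt to diverge from; the relevant comparison is to the cited source.

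On its own merits, your sketch is sound where it is concrete. Your proof of part~$(1)$ --- project onto each coordinate axis, apply one-dimensional Helly to the pairwise-intersecting intervals, and take the product of the resulting points --- is exactly the standard argument showing that axis-parallel boxes have Helly number~$2$, and the two-disjoint-boxes lower bound is correct. Your proof of part~$(2)$ by the greedy sweep over right endpoints is also correct: the intervals whose right endpoints become piercing points are pairwise disjoint by construction, so if more than $m$ points are needed you extract $m+1$ pairwise disjoint intervals, a subfamily of size $m+1$ that is not $m$-pierceable; $m+1$ disjoint intervals gives the matching lower bound. For parts~$(3)$ and~$(4)$ you correctly identify that the parity split in $h(d,2)$ and the constant~$16$ in $h(2,3)$ come from a delicate case analysis that cannot be reconstructed from Helly-type reductions alone, and defer to~\cite{Danzer82}; that is the honest position, since this paper likewise treats them as imported facts.

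One imprecision in part~$(5)$: the phrase ``arbitrarily large families such that every finite sub-family is $m$-pierceable but the whole family is not'' only parses if the family is infinite, since for a finite family the family is itself one of its finite sub-families. What you actually need is the statement that for every integer $h$ there is a \emph{finite} family of boxes in which every $h$-membered subfamily is $m$-pierceable but the whole family is not, from which $h(d,m) > h$ for all $h$ and hence $h(d,m)=\infty$. Also note the typo in the statement you were given ($n\geq 3$ should read $m\geq 3$), which you correctly read through.
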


Katchalski \textit{et al.} proved  a result for families of homothetic triangles 
in a plane (\cite{Katchalski96}). This result is similar to the intersection property of 
axis parallel boxes in $\real^d$,
studied by Danzer \textit{et al.} This result can also be considered as a Helly-type
theorem for more than one piercing of convex bodies. Theorem~\ref{theorem:triangle}, below, 
presents the main result of their paper.
\begin{theorem}\label{theorem:triangle}
Let $\mathcal{T}$ be a family of homothetic triangles in a plane. If any nine of them can be pierced by 
two points, then all the members of $\mathcal{T}$ can be pierced by two points.
\end{theorem}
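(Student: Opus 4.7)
The plan is to prove the contrapositive: if $\mathcal{T}$ is not $2$-pierceable, then some subfamily of at most nine members is already not $2$-pierceable. First I would reduce to finite $\mathcal{T}$ by a standard compactness argument. Since every $T \in \mathcal{T}$ is a positive homothet of one fixed triangle $T_0$, the triangles share three outward edge-normals $n_1, n_2, n_3$, and each can be written
\[
T = \bigl\{x \in \real^2 : \langle n_j, x\rangle \le c_j(T),\ j = 1, 2, 3\bigr\}.
\]
A point $p$ fails to lie in $T$ precisely when $\langle n_j, p\rangle > c_j(T)$ for some $j$, which I call a \emph{failure direction} of $p$ at $T$.

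The number nine arises naturally: if a pair $(p, q)$ fails to pierce some $T$, then $p$ has a failure direction $j$ and $q$ has a failure direction $k$, assigning $T$ a ``type'' $(j, k) \in \{1, 2, 3\}^2$ with exactly nine possibilities. I would pass to a minimal non-$2$-pierceable subfamily $\mathcal{T}^* \subseteq \mathcal{T}$, and for each $T \in \mathcal{T}^*$ fix a pair $(p_T, q_T)$ piercing $\mathcal{T}^* \setminus \{T\}$ but not $T$; each $T$ then carries a canonical failure type $(j_T, k_T)$. The goal becomes to show that no type is repeated in $\mathcal{T}^*$, which would force $|\mathcal{T}^*| \le 9$.

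If $T, T' \in \mathcal{T}^*$ shared type $(j, k)$, I would unpack the statements ``$(p_T, q_T)$ pierces $T'$'' and ``$(p_{T'}, q_{T'})$ pierces $T$'' into strict inequalities among $c_j(T), c_j(T'), c_k(T), c_k(T')$, and then use the third direction $\ell$ together with the linear dependence $a_1 n_1 + a_2 n_2 + a_3 n_3 = 0$ (with positive edge-length weights $a_i$) that characterizes the three outward normals of any triangle to close the loop into a contradiction. A parallel route is to parameterize $T \mapsto (c_1(T), c_2(T), c_3(T)) \in \real^3$ and reduce the problem to the Helly number $h(3, 2) = 9$ for axis-parallel boxes in $\real^3$ (Theorem~\ref{theorem:danzer}(3)); one would then have to observe that the piercing points are restricted to a fixed $2$-plane in $\real^3$ and check that the box argument adapts to this restricted piercing.

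The main obstacle I expect is this last step. A single pair of same-type triangles only produces two inequalities that are individually consistent, so the contradiction must exploit the third edge direction and the triangle-specific linear dependence of the three normals. This is exactly where the ``triangle'' hypothesis is consumed: for homothets of a convex quadrilateral the pigeonhole would only yield $4 \times 4 = 16$ types and the analogous dependence relation among four normals is much weaker, so the Helly number can grow (and for more complicated bodies can be infinite in the spirit of Theorem~\ref{theorem:danzer}(5)). Producing the sharp constant $9$, rather than a weaker bound from crude pigeonholing, is therefore the heart of the argument.
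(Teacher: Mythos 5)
This paper does not prove Theorem~\ref{theorem:triangle}; the result is stated without proof, credited to Katchalski \textit{et al.}~\cite{Katchalski96}, as background for the $(k,G)$-tester in Section~\ref{section:kpiercing}. So there is no in-paper argument to compare against, and your sketch must be judged on its own terms. The framing is reasonable --- fixed outward normals $n_1,n_2,n_3$, a ``failure type'' for each unpierced triangle, a minimal non-$2$-pierceable subfamily $\mathcal{T}^*$ --- but the step you flag as the obstacle is a genuine gap, and the pigeonhole you outline cannot close it. Suppose $T,T'\in\mathcal{T}^*$ share type $(j,k)$. Unpacking ``$(p_T,q_T)$ pierces $T'$'' and ``$(p_{T'},q_{T'})$ pierces $T$'' gives four sub-cases, and two of them are genuinely consistent: for instance $q_T\in T'$ and $p_{T'}\in T$ yields $c_k(T)<c_k(T')$ together with $c_j(T')<c_j(T)$, with no contradiction available from the third normal without further hypotheses. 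So distinct members of $\mathcal{T}^*$ can share a type, and the injectivity needed to conclude $|\mathcal{T}^*|\le 9$ does not follow. There is also an accounting problem: a piercing pair $\{p_T,q_T\}$ is unordered, so the natural type is the unordered pair $\{j_T,k_T\}$, of which there are only $6$, not $9$; a clean type-injectivity argument, if it worked, would prove the wrong constant, which is itself a signal that the mechanism producing $9$ is not a one-shot pigeonhole.

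Your alternative route via $\Phi(x) = (\langle n_1,x\rangle,\langle n_2,x\rangle,\langle n_3,x\rangle)$ and $h(3,2)=9$ from Theorem~\ref{theorem:danzer} is the right heuristic for where $9$ comes from, but as you half-acknowledge, it does not close either. Applying Danzer \textit{et al.} to the (truncated) boxes in $\real^3$ produces two piercing points somewhere in $\real^3$, with no guarantee that they lie on the $2$-plane $\Phi(\real^2)$; they therefore cannot be pulled back to points of the original plane, and ``every $9$ boxes are $2$-pierced by points on the plane'' is a strictly stronger hypothesis than what the box theorem consumes. Bridging that gap --- showing that for this particular $2$-flat and these particular upper-orthant-like sets the unconstrained piercing points can be replaced by constrained ones --- is exactly where the triangle-specific geometry must enter, and it is the substance of the Katchalski--Nashtir argument rather than a corollary of Theorem~\ref{theorem:danzer}. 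As written, your proposal correctly identifies the two candidate strategies and correctly locates the difficulty in each, but supplies neither the missing contradiction in the type argument nor the missing restriction lemma in the box reduction.
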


This result is  best possible in the following sense:
\begin{itemize}
 \item the bound of \emph{nine} is tight
 \item similar statements do not hold  for homothetic  (or even translated) copies of a symmetric
       convex hexagon
\end{itemize}

\section{Robust Helly for one piercing of symmetric convex body}\label{section:1piercing}
Helly's theorem is a fundamental result in discrete geometry, describing the ways in which convex 
sets  intersect with each other. In our case, we will  focus on those subset of convex  sets 
whose intersection properties behave \emph{symmetric} in certain ways. Observation~\ref{obs:SCB}  
explains this in detail. In order to design the \textit{tester} for $(1,A)$-cluster testing problem, 
we will crucially use this observation,  Helly's  and fractional Helly's theorem.

\begin{obs}\label{obs:SCB}
 Let $A$ be a symmetric convex body in $\real^d$ containing $n$ points, then $n$ translated 
 copies of $A$ centered at these $n$ points have a common intersection. Moreover, a translated 
 copy of $A$ centered at a point in the common intersection contains all these $n$ points.
\end{obs}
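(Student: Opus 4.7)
The plan is to reduce everything to a single symmetry identity. Since $A$ is centrally symmetric with respect to the origin, the defining property I will exploit throughout is the equivalence
\[
v \in A \;\Longleftrightarrow\; -v \in A,
\]
which in translated form reads $q \in p + A \Longleftrightarrow p \in q + A$. This ``swap'' identity is really the only fact I need from the hypothesis on $A$.

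First I would establish that the $n$ translated copies $p_1 + A, \ldots, p_n + A$ share a common point by exhibiting one explicitly: the origin. Since each $p_i$ lies in $A$ by hypothesis, central symmetry gives $-p_i \in A$, which is exactly the statement $0 \in p_i + A$. Hence the origin lies in all $n$ translated copies, proving the intersection is non-empty.

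For the ``moreover'' clause, I would take an arbitrary point $q$ in the common intersection $\bigcap_i (p_i + A)$ and show that the translate $q + A$ contains every $p_i$. Fix an index $i$. The assumption $q \in p_i + A$ unpacks to $q - p_i \in A$; applying central symmetry yields $p_i - q \in A$, which is precisely the statement $p_i \in q + A$. Since $i$ was arbitrary, all $n$ points lie in $q + A$, as required.

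I do not anticipate a real obstacle here; the only thing to be careful about is not to confuse the ambient $A$ (which houses the $n$ points) with its translates (which are centered at those points), and to invoke the symmetry exactly at the step where the roles of $p_i$ and $q$ are exchanged. Everything else is a one-line unpacking of the definition of translation.
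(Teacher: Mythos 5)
Your proof is correct and complete. The paper states this observation without supplying a proof (treating it as self-evident), and your argument — exhibiting the origin as an explicit common point via $p_i \in A \Rightarrow -p_i \in A \Rightarrow 0 \in p_i + A$, and then using the swap identity $q \in p_i + A \Leftrightarrow p_i \in q + A$ for the ``moreover'' clause — is exactly the natural two-line justification the authors had in mind. One minor remark: as stated the observation takes the $n$ points to lie in $A$ itself (centered at the origin), which is the case you prove; when the paper later invokes the observation for points lying in an arbitrary \emph{translate} $c + A$, the same argument applies verbatim with $c$ replacing the origin (i.e., $p_i \in c + A \Leftrightarrow c \in p_i + A$), so your reasoning already covers every use the paper makes of this fact.
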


\begin{lem}
 Given a set $S$ of  $n$  points in $\real^d$, if every $d+1$ (where $d+1\leq n$) of them are contained
 in (a translated copy of) a symmetric convex body $A$, then all the $n$ points are contained in (a translated copy of) $A$.
\end{lem}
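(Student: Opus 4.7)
The plan is to reduce the statement to a direct application of Helly's theorem (Theorem~\ref{theorem:helly}) by passing from points to associated convex sets, using the symmetry of $A$ to dualize ``containment'' into ``intersection.'' The key identity I would exploit, which is exactly the content of Observation~\ref{obs:SCB}, is that since $A$ is symmetric (so $-A = A$), a point $v$ lies in $p + A$ if and only if $p$ lies in $v + A$.

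Concretely, for each point $p_i \in S$, I would define the convex set $C_i := p_i + A$. Then the condition that a $(d{+}1)$-subset $\{p_{i_1}, \dots, p_{i_{d+1}}\}$ is contained in some translated copy $v + A$ is equivalent to saying that $p_{i_j} - v \in A$ for all $j$, and by symmetry of $A$ this is the same as $v \in p_{i_j} + A = C_{i_j}$ for all $j$. Hence the hypothesis translates to: every $d+1$ of the sets $C_1, \dots, C_n$ have a common point.

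Since each $C_i$ is a translate of the convex body $A$ and is therefore convex, Helly's theorem applies and yields a point $v^* \in \bigcap_{i=1}^n C_i$. Unwinding the dualization once more, $v^* \in p_i + A$ for every $i$ means $p_i \in v^* + A$ for every $i$, i.e., all $n$ points of $S$ lie in the single translated copy $v^* + A$ of $A$. This completes the argument.

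The only nontrivial step is the symmetry-based dualization in the first reduction; once that is in place, Helly is applied off the shelf. I would therefore expect the main ``obstacle'' (really just the point that needs explicit justification) to be stating cleanly that, because $A$ is centrally symmetric, the relations ``$p_{i_j} \in v + A$ for all $j$'' and ``$v \in \bigcap_j (p_{i_j} + A)$'' are equivalent, so that the combinatorial hypothesis about containment in translates of $A$ becomes precisely the hypothesis of Helly's theorem applied to the family $\{C_i\}$.
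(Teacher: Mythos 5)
Your proof is correct and follows the same route as the paper's: you form the family of translates $C_i = p_i + A$, use the central symmetry of $A$ to translate ``$d+1$ points fit in a translate of $A$'' into ``$d+1$ of the $C_i$ intersect'' (the paper delegates exactly this dualization to Observation~\ref{obs:SCB}), apply Helly's theorem to get a common point $v^*$, and dualize back to conclude all points lie in $v^* + A$. The only difference is presentational---you spell out the symmetry argument inline rather than citing the observation.
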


\begin{proof}
 Consider a set $\mathcal{B}$ of  translated copies of $A$ centered at  points in $S$. Since every $d+1$ of the given points 
are contained in (a translated copy of) $A$, by 
 Observation~\ref{obs:SCB}, every $d+1$ elements in $\mathcal{B}$ has a non-empty intersection. 
 By Helly's theorem, all elements in $\mathcal{B}$ have a non-empty intersection. 
 Let $q$  be a point from this intersection. Then $q$ belongs to every element in $\mathcal{B}$ and 
 hence, by Observation~\ref{obs:SCB}, all the centers of the elements in $\mathcal{B}$, \textit{i.e.}, all the $n$ 
 points in $S$, are contained in (a translated copy of) $A$ centered at $q$.
\end{proof}

\begin{lem}\label{lem:symConvFracHelly}
Let $S$ be a set of $n$ points in $\real^d$ (where $n\geq d+1$). If at least $\epsilon n$ (where $0<\epsilon\leq1$) 
points cannot be contained in any translated copy of a symmetric convex body $A$, then at least 
$\epsilon^{d+1}$ fraction of all the $d+1$ size subsets of $S$ (number of such subsets 
is ${n \choose d+1})$  cannot be contained in any translated copy of $A$.
\end{lem}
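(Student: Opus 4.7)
The plan is to apply Fractional Helly's theorem (Theorem~\ref{theorem:fractHelly}) to the family of translated copies of $A$ centered at the points of $S$, and interpret both hypothesis and conclusion through this family using the symmetry of $A$.

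First I would set up the translation. For each $p \in S$, let $B_p := p + A$ denote the translated copy of $A$ centered at $p$, and consider the family $\mathcal{B} = \{B_p : p \in S\}$. By Observation~\ref{obs:SCB}, a subset $T \subseteq S$ with $|T| = d+1$ is contained in some translated copy of $A$ if and only if the subfamily $\{B_p : p \in T\}$ has nonempty common intersection. So the conclusion we want becomes: at least $\epsilon^{d+1}\binom{n}{d+1}$ of the $(d+1)$-subfamilies of $\mathcal{B}$ have empty intersection.

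Next I would translate the hypothesis into a covering (depth) statement about $\mathcal{B}$. The symmetry of $A$ gives $q \in B_p$ iff $p - q \in A$ iff $q - p \in A$ iff $p \in B_q$. Hence the number of sets of $\mathcal{B}$ containing a fixed point $q \in \mathbb{R}^d$ is exactly $|S \cap B_q|$, the number of points of $S$ inside the translated copy of $A$ centered at $q$. The hypothesis says every translated copy of $A$ misses at least $\epsilon n$ points of $S$, i.e., $|S \cap B_q| \leq (1-\epsilon)n$ for every $q$, so no point of $\mathbb{R}^d$ lies in more than $(1-\epsilon)n$ sets of $\mathcal{B}$.

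Finally I would apply the contrapositive of Theorem~\ref{theorem:fractHelly} with $\alpha = 1 - \epsilon^{d+1}$. The Kalai--Eckhoff bound gives
\[
\beta(d, 1-\epsilon^{d+1}) \;=\; 1 - (1 - (1-\epsilon^{d+1}))^{1/(d+1)} \;=\; 1 - \epsilon.
\]
If strictly more than $(1 - \epsilon^{d+1})\binom{n}{d+1}$ of the $(d+1)$-subfamilies of $\mathcal{B}$ had common intersection, Fractional Helly would produce a point of $\mathbb{R}^d$ lying in more than $(1-\epsilon)n$ of the sets $B_p$, contradicting the depth bound established above. Therefore at least $\epsilon^{d+1}\binom{n}{d+1}$ of the $(d+1)$-subsets are not coverable by any translated copy of $A$, as required. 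The only delicate point is the symmetry-based bijection between ``points covered by a translated copy of $A$'' and ``sets of $\mathcal{B}$ covering a point''; everything else is a clean application of Fractional Helly with the exponent magic $\alpha = 1 - \epsilon^{d+1}$ chosen precisely so that the threshold $\beta$ matches $1 - \epsilon$.
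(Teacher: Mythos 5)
Your proof is correct and follows essentially the same route as the paper's: form the family $\mathcal{B}$ of translated copies of $A$ centered at the points of $S$, use the symmetry of $A$ (Observation~\ref{obs:SCB}) to convert containment statements into intersection statements about $\mathcal{B}$, and apply the contrapositive of Fractional Helly with the Kalai--Eckhoff bound $\beta(d,\alpha)=1-(1-\alpha)^{1/(d+1)}$ and the choice $\alpha=1-\epsilon^{d+1}$. You merely spell out the symmetry bijection $q\in B_p\Leftrightarrow p\in B_q$ more explicitly than the paper does.
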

\begin{proof}
Consider a set $\mathcal{B}$ of  translated copies of $A$ centered at  points in 
 $S$. Now, by fractional Helly's theorem, for every $\alpha~(\mbox{where}~0<\alpha \leq1)$,  
 there exists  $\beta=\beta(d,\alpha)$  such that if at least an $\alpha$ 
 fraction of  $n \choose d+1$ subsets (of size $d+1$) in $\mathcal{B}$ has a non-empty 
 intersection,  then there  exists a point (say $p$) which is contained in at 
 least $\beta$ fraction of elements of  $\mathcal{B}$. 

 Consider a translated copy of $A$ centered at $p$. 
 By Observation~\ref{obs:SCB}, for every $\alpha~(\mbox{where}\linebreak~0<\alpha \leq1)$,  there exists  
 $\beta=\beta(d,\alpha)$  such that if at least an $\alpha$ fraction of  $n \choose d+1$ 
 subsets (of size $d+1$) in $S$ are  contained in  $A$, then at least $\beta n$ 
 points are contained  in  $A$.
 
 Thus, if at least $(1-\beta) n$  points  cannot be contained in $A$, then at 
 least $1-\alpha$ fraction of  $n \choose d+1$  subsets  (of size $d+1$) in $S$ cannot be 
 contained in $A$. (Contrapositive of the above statement.)
 
Since $\beta=1-(1-\alpha)^\frac{1}{(d+1)}$ (\cite{UB1},~\cite{UB2}),  choosing $1-\beta$ as 
$\epsilon$ makes $1-\alpha$ equal to  $\epsilon^{d+1}$, which are the required values of the
parameters.
\end{proof}

\begin{theorem}\label{theorem:testunitball}
 Consider a set of $n$ points in $\real^d$ ($n\geq d+1$) located such that at least $\epsilon n$
 (where $0<\epsilon\leq1$) points cannot be contained in any translated copy of a symmetric convex 
 body $A$. If  we randomly sample $\frac{1}{\epsilon^{d+1}}\ln\frac{1}{\delta}$ (where $0<\delta\leq1$)
  many sets of $d+1$ points, then there exists a set in the sample which cannot be contained in 
  any translated copy of $A$,  with probability at least $1-\delta$.
\end{theorem}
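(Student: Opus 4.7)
The plan is to reduce Theorem~\ref{theorem:testunitball} directly to Lemma~\ref{lem:symConvFracHelly} via a routine independent-trials amplification. First I would invoke Lemma~\ref{lem:symConvFracHelly}: under the hypothesis that at least $\epsilon n$ points of $S$ cannot be contained in any translated copy of $A$, at least an $\epsilon^{d+1}$ fraction of the $\binom{n}{d+1}$ subsets of size $d+1$ of $S$ are ``bad,'' i.e.\ cannot be contained in any translated copy of $A$. So if a single subset of size $d+1$ is drawn uniformly at random from $S$, the probability that it is bad is at least $\epsilon^{d+1}$.

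Next I would consider the $m = \frac{1}{\epsilon^{d+1}}\ln\frac{1}{\delta}$ independently drawn samples, each a uniformly random $(d+1)$-subset. Call a sample ``good for the adversary'' if it is containable in some translated copy of $A$. By the previous step, each sample is good for the adversary with probability at most $1 - \epsilon^{d+1}$, independently. Hence the probability that \emph{all} $m$ samples are good for the adversary (i.e.\ none of them witnesses the non-clusterability) is at most
\[
(1-\epsilon^{d+1})^{m} \;\le\; e^{-\epsilon^{d+1}\,m} \;=\; e^{-\ln(1/\delta)} \;=\; \delta,
\]
using the standard inequality $1-x \le e^{-x}$. Taking complements, with probability at least $1-\delta$ there is at least one sampled $(d+1)$-subset that cannot be contained in any translated copy of $A$, which is precisely the conclusion of the theorem.

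There is essentially no hard step here once Lemma~\ref{lem:symConvFracHelly} is available; the only thing worth being careful about is that the sampling model matches the fractional Helly conclusion (a \emph{fraction of $(d+1)$-subsets} is bad), so the natural sampling unit is a uniformly random $(d+1)$-subset rather than, say, individual points. The mild obstacle, if any, is simply making explicit the independence of the $m$ trials and the standard bound $1-x\le e^{-x}$ to convert the $\epsilon^{d+1}$ per-trial guarantee into the logarithmic sample size; both are routine.
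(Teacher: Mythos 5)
Your proposal is correct and follows essentially the same route as the paper's own proof: invoke Lemma~\ref{lem:symConvFracHelly} to get a per-sample failure probability of at least $\epsilon^{d+1}$, then apply $(1-\epsilon^{d+1})^m \le e^{-\epsilon^{d+1} m} = \delta$ over the $m$ independent $(d+1)$-subset samples. The only difference is presentational (you state the $1-x\le e^{-x}$ step and the sampling model more explicitly than the paper does).
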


\begin{proof}
 By Lemma~\ref{lem:symConvFracHelly}, if at least $\epsilon n$ points  cannot be contained
 in (any translated copy of) $A$, then at least $\epsilon^{d+1}$ fraction of  $n \choose d+1$ sets 
 (of size $d+1$)  cannot be contained in (any translated copy of) $A$. A  set of $d+1$ points cannot be contained in $A$ with probability $\epsilon^{d+1}$. 
 Hence, the probability that it can be contained in $A$ is  $1-\epsilon^{d+1}$.  Thus, the 
probability that all the sampled sets are contained in $A$ is   
 $ \leq (1-\epsilon^{d+1})^{\frac{1}{\epsilon^{d+1}}\ln\frac{1}{\delta}} \leq e^{-\ln\frac{1}{\delta}}=\delta.$
 \end{proof}

 Algorithm~\ref{algorithm:algo} is a randomized algorithm, \emph{tester}, for $(1,A)$-cluster 
 testing problem.\\ 
 \begin{algorithm}[H]\label{algorithm:algo}
 \SetAlgoLined
 \KwData{A set $S$ of $n$ points in $\real^d$ (input is given as black-box), $0<\delta,\epsilon\leq1$.}
 \KwResult{Returns a set of $d+1$ points, if it exists, which cannot be contained in $A$ 
  or accepts (\textit{i.e.}, all the points can be contained in $A$).}
 \Repeat {$\frac{1}{\epsilon^{d+1}}\ln\frac{1}{\delta}$ many times}
  {
  select a set (say $W$) of $d+1$ points uniformly at random from $S$\\
 \If{$W$  cannot be contained in $A$}
  {return $W$ as witness}
  }
  \If{no witness found }{return \,\  /*\emph{~all the points can be contained in  $A$~}*/}

 \caption{$(1,A)$-cluster testing  in a  symmetric convex body $A$}
\end{algorithm}
This algorithm  has a one sided error, \textit{i.e.},  if all the points can be contained in a 
 symmetric convex body $A$ then it accepts  the input,  else it outputs a witness with probability at least  $1-\delta$. 
 Correctness of the algorithm follows from Theorem~\ref{theorem:testunitball}.
Thus, in the problem of testing $(1,A)$-clustering  for a symmetric convex body $A$, the sample size is independent of the 
input size and hence the property is \emph{testable}. Moreover, the \emph{tester} works for all the possible 
values  of $\epsilon$ (for $0<\epsilon\leq1$).

\section{Robust Helly for more than one piercing of convex bodies}\label{section:kpiercing}

\subsection{Helly-type results for more than one piercing of convex bodies}

The following lemma says that a \emph{``Helly-type"} result is not true for circles even for $2$-piercing. The 
result can be easily generalized for higher dimensions also. (The proof of the following lemma was suggested 
by Prof. Jeff Kahn in a private communication.)

\begin{lem}
 Consider a set of $n$ circles in a plane. For any constant $w~(\mbox{where~}w<n)$, the condition that every  $w$
 circles are pierced at two points is not  sufficient to ensure that all the circles in the 
 set are pierced at two points.
 \end{lem}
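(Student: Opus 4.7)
The plan is to construct, for any constant $w$, a family $\mathcal{F}$ of $n>w$ circles in the plane such that every $w$-subfamily is $2$-pierceable while $\mathcal{F}$ itself is not. My approach builds $\mathcal{F}$ as a union of pencils of circles through pairs of generic points, in the spirit of vertex-cover arguments for Helly-type multi-piercing results.

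First, I would choose $k$ generic points $p_1,\ldots,p_k$ in the plane with $k\geq 4$. For each unordered pair $\{i,j\}\subseteq[k]$, include a pencil $\mathcal{P}_{ij}$ of $m$ distinct circles all passing through $p_i$ and $p_j$, chosen so that these $m$ circles share only $p_i$ and $p_j$ (no accidental third common point). Set $\mathcal{F}=\bigcup_{i<j}\mathcal{P}_{ij}$, of size $n=m\binom{k}{2}$.

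Next, I would argue $\mathcal{F}$ is not $2$-pierceable. Any piercing pair $\{a,b\}$ must cover every pencil $\mathcal{P}_{ij}$. Since any point other than $p_i,p_j$ lies on at most one circle of $\mathcal{P}_{ij}$ and $m\geq 3$, one of $a,b$ must equal $p_i$ or $p_j$. Requiring this simultaneously for every pencil forces $\{a,b\}\subseteq\{p_1,\ldots,p_k\}$ to be a vertex cover of $K_k$; but the vertex cover number of $K_k$ equals $k-1\geq 3$ for $k\geq 4$, a contradiction. For any $w$-subset $\mathcal{S}$, let $E(\mathcal{S})$ be the set of pairs $\{i,j\}$ for which $\mathcal{P}_{ij}$ contributes a circle to $\mathcal{S}$; this is a subgraph of $K_k$ with at most $w$ edges. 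Whenever some pair $\{p_i,p_j\}$ is a vertex cover of $E(\mathcal{S})$, every circle in $\mathcal{S}$ passes through $p_i$ or $p_j$, giving a $2$-piercing. For $k=4$ and $w\leq 5$, the subgraph $E(\mathcal{S})$ must miss at least one edge of $K_4$ and therefore admits a $2$-vertex cover; this handles the case $w\leq 5$ cleanly.

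The hard part is extending to all $w$. The bare pencil-through-pairs construction breaks down once $w$-subsets can induce subgraphs of $K_k$ with vertex cover strictly greater than $2$ (for instance the $5$-cycle $C_5$ or a triangle plus a disjoint edge). To handle arbitrary $w$, one needs to engineer higher-order concurrences: triples or larger tuples of circles meeting at a common point, organized in a combinatorial design so that every $w$-subset can be partitioned into two subfamilies, each sharing a common point, while no such partition covers the full family. Realizability of such designs in the plane follows from a dimension-counting argument, since each circle carries three real parameters and a bounded number of concurrence constraints is generically satisfiable. Navigating this design so as not to accidentally introduce a $2$-piercing of $\mathcal{F}$ (via extra, unintended concurrences) is the core technical obstacle.
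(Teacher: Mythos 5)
Your proposal does not close the proof: you establish the claim only for $w\leq 5$ and then explicitly identify the extension to arbitrary $w$ as ``the core technical obstacle'' without resolving it. The pencil-through-pairs construction with $k=4$ base points genuinely breaks once a $w$-subset can select an edge set of $K_k$ whose vertex cover number exceeds $2$, and the proposed fix --- engineering a combinatorial design of higher-order concurrences realizable by circles so that every $w$-subset splits into two concurrent subfamilies while the whole family does not --- is asserted but not constructed, and the generic dimension-count you invoke does not by itself control the \emph{unintended} $2$-piercings you yourself flag as the difficulty. As written, this is a sketch of an approach, not a proof of the lemma for all constant $w$.

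The paper's argument is much more direct and of a different character entirely. It fixes a single unit circle, takes all chords subtending an arc of $180-\phi$ degrees, and replaces each extended chord-line by a very large circle (not containing the unit circle's center) that approximates that line. No two points can pierce the whole continuum of these circles. For any fixed $w$, however, two random antipodal points on the unit circle fail to pierce any given circle of the family with probability $\phi/\pi$, so by a union bound they pierce all $w$ chosen circles with probability $1-w\phi/\pi>0$ once $\phi<\pi/w$; the probabilistic method then yields a good piercing pair. Thus the construction is parameterized by a single continuous quantity $\phi$ that can be tuned to any target $w$, sidestepping the combinatorial escalation that stalls your approach. If you want to salvage your line of attack, you would at minimum need to exhibit, for each $w$, an explicit concurrence design and a verification that it introduces no global $2$-piercing --- but the paper's one-parameter family shows this is unnecessary.
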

 \begin{proof}
We will prove the lemma by construction. Consider a unit circle and  look at any chord in 
it  bounding an arc of angle $(180-\phi)$ degrees (for some very tiny constant $\phi$). If we 
extend out the chord in both directions, it is a straight line which can be viewed as the 
limit of a large circle (that doesn't contain the center of the original unit circle). Consider 
all such large limiting circles obtained by all chords which bound an arc of angle $(180-\phi)$ 
degrees. Then, it is easy to see that there are no two points which pierce all the circles. 
However, for any constant $w$, any $w$ circles are pierced by two points. To see this, take 
any $w$ circles and pick two random antipodal points \footnote{The antipode  of a point on 
the perimeter of a circle is the point which is diametrically opposite to it.}
(say $r_1$ and $r_2$) from the original unit circle. 

The probability that any particular circle 
$C$ is neither pierced at  $r_1$ nor at $r_2$ is $\frac{2\phi}{2\pi}=\frac{\phi}{\pi}$.
Thus, the probability that at least one of the $w$ circles is neither pierced at $r_1$ nor 
at $r_2$ is ${\frac{w\phi}{\pi}}$ (by probability union bound). Hence, the probability that 
all the $w$ circles are  either pierced at  $r_1$ or  
at $r_2$ is $1-\frac{w\phi}{\pi}$.

Thus, for a given $w$, by taking a sufficiently 
small value of $\phi \,\ (\mbox{where~}\phi<\frac{\pi}{w})$, we can make above 
probability high enough. Hence, by  probabilistic arguments, we  prove that there 
exist some two points that pierce any constant number of circles.
\end{proof}

Using arguments similar to the proof of above lemma, it is easy to prove that a 
\emph{``Helly-type"} result for more than one piercing is also not true for a set 
of translated ellipsoids. Katchalski \textit{et al.}~\cite{Katchalski96} 
and Danzer \textit{et al.} \cite{Danzer82} proved a \emph{``Helly-type"} result 
for more than one piercing of triangles and boxes, respectively. According to 
\cite{Katchalski96}, a \emph{``Helly-type"} result for more than one piercing  
is not true for centrally symmetric hexagon (with  parallel opposite edges). 
Similar type of result is true for triangles and pentagons (with pair of parallel 
edges) which are not  symmetric convex bodies. Thus, among symmetric convex bodies 
(spheres, ellipsoids and $n$-gons (for $n\leq6$)), a \emph{``Helly-type"} result for 
more than one piercing is possible only for parallelograms. We have following observation 
regarding the same:
\begin{obs}\label{obs:parallelogram}
 Let $S$ be a set of $n$ points in $\real^d$. 
 If every set of $h$ points (for finite possible values of $h$, see Theorem~\ref{theorem:danzer})  
 in $S$ is contained in $m~(\mbox{where~}m>0)$ translated parallelograms, then all the $n$  points are 
 contained in $m$ translated parallelograms.
\end{obs}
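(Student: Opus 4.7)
The plan is to mimic the proof of the $(1,A)$-cluster Helly lemma from Section~\ref{section:1piercing}, but to replace Helly's theorem in that argument by the Danzer~\textit{et al.}\ Helly-type theorem (Theorem~\ref{theorem:danzer}) for $m$-piercing of boxes. The two ingredients needed are Observation~\ref{obs:SCB} (a parallelepiped is a symmetric convex body, so the ``reverse'' point/ball duality applies) and the fact that Danzer~\textit{et al.}'s result, although phrased for axis-parallel boxes, is invariant under affine change of coordinates and therefore carries over to translated parallelepipeds.

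First, for each point $p\in S$, let $A_p$ denote the translated copy of the parallelepiped $A$ centered at $p$, and set $\mathcal{B}=\{A_p:p\in S\}$. Given any $h$-subset $T\subseteq S$ (for the appropriate $h=h(d,m)$ from Theorem~\ref{theorem:danzer}), the hypothesis supplies $m$ translated copies of $A$ whose union covers $T$; by Observation~\ref{obs:SCB}, the $m$ centers of these covering copies pierce every member of the subfamily $\{A_p:p\in T\}$. Hence every $h$-membered subfamily of $\mathcal{B}$ is $m$-pierceable.

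Next, choose an affine map $\Phi:\real^d\to\real^d$ sending $A$ to an axis-parallel box. Because $\Phi$ is affine, it sends translated copies of $A$ to translated axis-parallel boxes and preserves $m$-pierceability exactly. Applying Theorem~\ref{theorem:danzer} to the family $\Phi(\mathcal{B})$, whose $h$-subfamilies are all $m$-pierceable, we conclude that the whole family $\Phi(\mathcal{B})$ is $m$-pierceable; pulling back by $\Phi^{-1}$, the family $\mathcal{B}$ itself is pierced by some $m$ points $q_1,\dots,q_m$. By Observation~\ref{obs:SCB}, $q_j\in A_p$ is equivalent to $p$ lying in the translated copy of $A$ centered at $q_j$, so the $m$ translated copies of $A$ centered at $q_1,\dots,q_m$ cover every point of $S$.

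The only genuine obstacle is the transfer from axis-parallel boxes to arbitrary translated parallelepipeds, and that is resolved by the affine invariance argument in the previous paragraph; everything else is a verbatim adaptation of the one-piercing lemma with Helly's theorem swapped for Theorem~\ref{theorem:danzer}. Note that the statement makes sense only for values of $(d,m)$ for which $h(d,m)$ is finite in Theorem~\ref{theorem:danzer} (for instance $m=1$ in every dimension, $m=2$ in every dimension, and $m=3$ in the plane); for the remaining cases the claim is vacuous as stated.
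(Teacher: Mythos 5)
Your proposal is correct and takes essentially the same approach as the paper: build the family $\mathcal{B}$ of translates of $A$ centered at the points, use Observation~\ref{obs:SCB} to translate ``contained in $m$ copies'' into ``$m$-pierceable,'' apply Theorem~\ref{theorem:danzer}, and translate back. The one thing you do that the paper does not is spell out the affine-change-of-coordinates step needed to pass from axis-parallel boxes (the setting of Theorem~\ref{theorem:danzer}) to arbitrary translated parallelepipeds; the paper applies the theorem to parallelograms directly without remarking on this, so your version is slightly more careful on that point.
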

\begin{proof}
 Consider the set $\mathcal{B}$ of translated parallelograms  centered at  points in $S$. 
 Since every set of $h$ points is contained in $m$ translated parallelograms, every 
 $h$-membered subset of $\mathcal{B}$ is  $m$-pierceable. Thus, by Theorem ~\ref{theorem:danzer}, $\mathcal{B}$ 
 is $m$-pierceable. Let  $q_1,q_2,...,q_m$ be $m$ points in these $m$-intersections. Each element 
 of $\mathcal{B}$ contains at least one of the $q_i$'s (where $1 \leq i \leq m$) and, therefore,  $m$ translated parallelograms with centers  as $q_1,q_2,...,q_m$ will 
 contain all the points in $S$ (by Observation~\ref{obs:SCB}). 
\end{proof}

\subsection{Fractional Helly for more than one piercing of convex bodies}
We  now design a \textit{weaker} version of \textit{tester} for $(k,G)$-clustering
(where $G$ is a bounded geometric object and $k>1$). The tester works for
some particular value of $\epsilon\in(\epsilon'(k,t),1]$, where $t$ is some constant that 
depends on the shape of geometric object.

We state the following conjecture for more than one piercing of convex bodies.
\begin{conj}\label{conj:conjecture}
For every $\alpha~(\mbox{where~}0<\alpha\leq 1)$, there exists $\beta = \beta(\alpha,k,d)$ with the following property. 
Let $C_1 , C_2 ,.., C_n$ be convex sets in $R^d$, $n \geq k(d + 1)$, such that at least 
$\alpha.{n \choose k(d+1)}$ 
of the collection of subfamilies of size $k(d + 1)$  are pierced at $k$ points, then at least $\beta n$ 
sets are pierced at  $k$ points. Also, $\beta$ approaches $1$ as $\alpha$ approaches  $1$.
\end{conj}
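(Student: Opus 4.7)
The plan is to bootstrap the conjecture from the ordinary Fractional Helly Theorem (Theorem~\ref{theorem:fractHelly}) by a pigeonhole-plus-iteration scheme, and to extract the $k$ piercing points one at a time.

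First, I would exploit the fact that a family of size $k(d+1)$ pierced by $k$ points must, by pigeonhole, contain a subfamily of size $d+1$ all of whose members are pierced by the same point, and hence have non-empty intersection. Double counting then transfers the hypothesis from $(k(d+1))$-subfamilies to $(d+1)$-subfamilies: if $\alpha\binom{n}{k(d+1)}$ of the $k(d+1)$-subfamilies are $k$-pierceable, each such subfamily contributes at least one intersecting $(d+1)$-subfamily, and each $(d+1)$-subfamily is counted at most $\binom{n-d-1}{k(d+1)-d-1}$ times. The identity $\binom{n}{k(d+1)}\binom{k(d+1)}{d+1}=\binom{n}{d+1}\binom{n-d-1}{k(d+1)-d-1}$ then yields that at least $\alpha'\binom{n}{d+1}$ of the $(d+1)$-subfamilies have a common point, with $\alpha'=\alpha/\binom{k(d+1)}{d+1}$.

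Second, I would apply Theorem~\ref{theorem:fractHelly} to obtain a point $p_1$ lying in at least $\beta_1 n$ of the $C_i$, where $\beta_1=1-(1-\alpha')^{1/(d+1)}$. Now let $\mathcal{R}$ be the collection of $C_i$'s that do \emph{not} contain $p_1$; I would try to show that a constant fraction (depending only on $\alpha,k,d$) of the $(k(d+1))$-subfamilies of $\mathcal{R}$ are $(k-1)$-pierceable, by noting that most of the originally pierceable $k(d+1)$-subfamilies, once their elements through $p_1$ are discarded, retain at least $(k-1)(d+1)$ members pierceable by the remaining $k-1$ points; a second averaging step upgrades ``$(k-1)(d+1)$ of them'' to ``many full $(k(d+1))$-subfamilies of $\mathcal{R}$'' by simply enlarging the chosen subfamilies with sets from $\mathcal{R}$ and re-counting. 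One then iterates the argument inductively on $k$ (the base case $k=1$ being Theorem~\ref{theorem:fractHelly}) to obtain $p_2,\dots,p_k$, producing a final $\beta(\alpha,k,d)$ that tends to $1$ as $\alpha\to 1$ because each $\beta_i$ does.

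The main obstacle, which I view as likely fatal to the conjecture in the stated generality, is that the class of convex bodies under consideration is completely unrestricted, whereas the lemma just proved shows that no Helly-type theorem for $2$-piercing holds even for circles or translated ellipsoids. Concretely, the induction step above breaks down because after removing the $C_i$'s containing $p_1$, the residual family may have no $(k-1)$-piercing structure at all, since a pierceable $(k(d+1))$-subfamily does not in general split into a ``part pierced by $p_1$'' and a ``part pierced by the rest''—a set can be pierced by any of the $k$ points, and the assignment is not canonical. I would therefore expect that the conjecture can only be pushed through under additional structural hypotheses on the sets, such as homothetic copies of a body for which Theorem~\ref{theorem:danzer} or Theorem~\ref{theorem:triangle} applies (e.g.\ axis-parallel boxes or homothetic triangles), where a bounded-witness Helly-type theorem is available to anchor the induction. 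Formalizing exactly which families admit the recursion, and controlling the dependence of $\beta$ on $k$ when the recursion is applied $k$ times, would be the real technical content of a proof.
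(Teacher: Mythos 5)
This statement is labeled a \emph{Conjecture} in the paper and the authors do not prove it; there is no proof to compare yours against. The paper only proves a \emph{weaker} substitute (Lemma~\ref{lem:kPiercing}) by a greedy covering argument, which works for arbitrary bounded geometric objects but only for $\epsilon$ close to $1$ (namely $\epsilon>1-\frac{1}{2(t+1)(k+1)}$) and constant $k,d$, and then uses that weaker statement in Theorem~\ref{theorem:kPiercing}. Your analysis correctly identifies the structural obstruction to proving the conjecture as stated: after extracting one piercing point $p_1$ via the Fractional Helly Theorem, the residual family has no canonical $(k-1)$-piercing structure, because a $k$-pierceable subfamily does not split into a ``part pierced by $p_1$'' and a ``part pierced by the rest,'' and the paper's own $2$-piercing counterexample for circles confirms that no exact Helly-type theorem is available to anchor such an induction in general. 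Your pigeonhole reduction from $k(d+1)$-subfamilies to $(d+1)$-subfamilies and the resulting single point $p_1$ are sound, but the inductive continuation is not, and you say so; this is an honest diagnosis rather than a proof. One thing worth noting is that the paper sidesteps exactly this issue by abandoning induction and instead greedily peeling off the ``best'' copy of $G$ together with a constant-factor annulus (Lemma~\ref{lem:covering}), which yields many witnesses of size $k+1$ (not $k(d+1)$) but only in the large-$\epsilon$ regime; if you want an unconditional result you may wish to compare your reduction with that argument.
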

\begin{lem}\label{lem:conj}
If  Conjecture~\ref{conj:conjecture} is true, then  we   have the following:  Consider a set of $n$ points in $\real^d$ 
(where~$n\geq k(d+1)$). If at least $\epsilon n$ (where $0<\epsilon<1$)
points cannot be contained in any $k$ translated copies of symmetric convex body $A$, then at least
$\gamma(\beta(\epsilon,k,d))$ fraction of $n \choose k(d+1)$ sets cannot be contained in any $k$ translated copies of  $A$.
\end{lem}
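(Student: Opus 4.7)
The plan is to mirror the structure of the proof of Lemma~\ref{lem:symConvFracHelly}, replacing fractional Helly's theorem with the assumed Conjecture~\ref{conj:conjecture} and using a $k$-piercing analogue of Observation~\ref{obs:SCB} to translate between ``$k$ translated copies of $A$ covering a set of points'' and ``a family of translated copies of $A$ being $k$-pierceable.''

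First, I would establish the following bidirectional correspondence for any subset $T \subseteq S$: the points of $T$ are contained in $k$ translated copies of $A$ if and only if the family $\mathcal{B}_T = \{p + A : p \in T\}$ is $k$-pierceable. For the forward direction, partition $T$ into $T_1, \ldots, T_k$ according to which covering copy $v_i + A$ contains each point; Observation~\ref{obs:SCB} applied to each $T_i$ shows that $v_i$ lies in every element of $\mathcal{B}_{T_i}$, so $\{v_1, \ldots, v_k\}$ pierces $\mathcal{B}_T$. For the reverse direction, if $\{q_1, \ldots, q_k\}$ pierces $\mathcal{B}_T$, then central symmetry (the same symmetry invoked in Observation~\ref{obs:SCB}) guarantees that each $p \in T$ lies in the translated copy $q_i + A$ for whichever $q_i$ belongs to $p + A$.

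Next, I would apply Conjecture~\ref{conj:conjecture} to the full family $\mathcal{B} = \{p + A : p \in S\}$ of size $n$ and take the contrapositive. The contrapositive reads: for every $\alpha \in (0,1]$, if no $k$ points pierce more than $\beta(\alpha, k, d) \cdot n$ members of $\mathcal{B}$, then fewer than an $\alpha$-fraction of the $\binom{n}{k(d+1)}$ subfamilies of size $k(d+1)$ are $k$-pierceable. Translating through the correspondence of the previous step, this becomes: if no $k$ translated copies of $A$ cover more than $\beta(\alpha, k, d) \cdot n$ points of $S$, then at least a $(1-\alpha)$-fraction of the size-$k(d+1)$ subsets of $S$ cannot be contained in any $k$ translated copies of $A$.

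Finally, I would set the parameters: given the hypothesis that at least $\epsilon n$ points cannot be contained in any $k$ translated copies (so at most $(1-\epsilon) n$ can be), I would choose $\alpha$ so that $\beta(\alpha, k, d) = 1 - \epsilon$, using monotonicity of $\beta$ in $\alpha$ (implicit in the conjecture's limiting statement $\beta \to 1$ as $\alpha \to 1$). Defining $\gamma := 1 - \alpha$ then yields the conclusion and expresses $\gamma$ as a function of $\beta(\epsilon, k, d)$ in the slightly loose notation of the statement. The main obstacle is not in the logical content, which is a clean contrapositive, but in justifying this implicit inversion $\alpha = \beta^{-1}(1-\epsilon)$ so that $\gamma$ is well-defined on the relevant range of $\epsilon$ and tends to $1$ as $\epsilon \to 1$; this property inherits directly from the corresponding limiting behavior of $\beta(\alpha, k, d)$ asserted in Conjecture~\ref{conj:conjecture}.
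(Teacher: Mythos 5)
Your proposal is correct and mirrors exactly what the paper intends: the paper's own ``proof'' of this lemma is just the one-line remark that it is analogous to the proof of Lemma~\ref{lem:symConvFracHelly}, and you have carried out that analogy in full---constructing the family $\mathcal{B}$ of translates centered at the points, using the central-symmetry argument of Observation~\ref{obs:SCB} in both directions to pass between ``$T$ is covered by $k$ translates of $A$'' and ``$\mathcal{B}_T$ is $k$-pierceable,'' invoking the conjecture, and taking the contrapositive with the parameter substitution $\gamma = 1-\alpha$, $\beta(\alpha,k,d) = 1-\epsilon$. Your closing observation that the inversion $\alpha = \beta^{-1}(1-\epsilon)$ requires an (unstated) monotonicity of $\beta$ in $\alpha$ is a fair point of rigor that the paper glosses over, though it is consistent with how the paper treats the notation $\gamma(\beta(\epsilon,k,d))$ as an implicitly defined function.
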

\begin{proof}
 Proof of this lemma is similar to the proof of Lemma~\ref{lem:symConvFracHelly}. 
\end{proof}

In the above lemma, $\gamma$ is an appropriately chosen function to compute the value of $1-\alpha$,~\textit{i.e.}, the fraction of $n \choose k(d+1)$ sets which cannot be 
contained in $k$ translated copies of  $A$.

Now, we  prove  a \textit{weaker} version of Conjecture~\ref{conj:conjecture}. We show that 
for bounded geometric objects,  a weaker version of  fractional Helly for more than one piercing is true. We 
use \emph{greedy} approach to prove the same.  
We  prove it for some $\epsilon \in(\epsilon',1]$,
where $\epsilon'=\epsilon'(k,t)$ (where $t$ is a constant that depends on the shape of the geometric 
object). The result is true  only for constant $k$ and $d$.

\begin{lem}\label{lem:kPiercing}
Consider $k$ translated copies of a geometric object $G$ and a set of $n$ points in $\real^d$ 
(for constant $k$ and  $d$). Then there exist $\epsilon'=\epsilon'(k,t)$ 
(where $\epsilon'(k,t)=1-\frac{1}{2(t+1)(k+1)}$, $t$ is a 
constant that depends on the shape of the geometric object) such that
 for all $\epsilon\in(\epsilon',1]$, if at least $\epsilon n$ points cannot be contained 
in  any $k$ translated copies of $G$, then there exist at least  $\Omega(n^{k+1})$ many witnesses of 
 $k+1$ points which cannot be contained in any $k$ translated copies of $G$.
\end{lem}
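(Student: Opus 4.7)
The plan is to construct the desired witnesses greedily, building a $(k+1)$-subset of $S$ one point at a time while maintaining the invariant that after $j$ selections the set $\{p_1,\dots,p_j\}$ is not coverable by $j-1$ translates of $G$. The constant $t$ should be chosen so that $t+1$ is the minimum number of translates of $G$ required to cover the difference body $G-G$; this depends only on the shape of $G$. The hypothesis $\epsilon>1-\frac{1}{2(t+1)(k+1)}$ guarantees that no $k$ translates of $G$ contain more than $(1-\epsilon)n$ points of $S$, and in particular any single translate contains at most $s:=\frac{n}{2(t+1)(k+1)}$ points.

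First I would bound the partner set $N(p):=\{q\in S:\{p,q\}\subseteq v+G\text{ for some }v\in\real^d\}$ of any point $p\in S$. A direct calculation shows that $q\in N(p)$ iff $q\in p+(G-G)$, so $N(p)\subseteq S\cap(p+(G-G))$; since $p+(G-G)$ is covered by $t+1$ translates of $G$, each containing at most $s$ points of $S$, we get $|N(p)|\leq(t+1)s=\frac{n}{2(k+1)}$. Next I would run the greedy procedure: pick $p_1$ arbitrarily, and at each step $j+1\leq k+1$ pick any $p_{j+1}\in S\setminus\bigcup_{i\leq j}(N(p_i)\cup\{p_i\})$. The forbidden set has size at most $j\cdot\frac{n}{2(k+1)}+j<n/2+k$, so for sufficiently large $n$ at least $\Omega(n)$ choices remain at every step, yielding $\Omega(n^{k+1})$ ordered sequences and hence, after dividing by $(k+1)!$, that many unordered witnesses.

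The main point to verify is that the non-partner condition actually preserves the invariant that $\{p_1,\dots,p_{j+1}\}$ is not coverable by $j$ translates of $G$. Assuming for contradiction that $j$ translates do cover $\{p_1,\dots,p_{j+1}\}$, the pigeonhole principle forces at least two of these $j+1$ points into a single translate. If both lie inside $\{p_1,\dots,p_j\}$, then combining that translate with the remaining $j-2$ singletons covers $\{p_1,\dots,p_j\}$ by only $j-1$ translates, contradicting the invariant at stage $j$; so the coinciding pair must include $p_{j+1}$, contradicting the non-partner choice. I expect this pigeonhole case analysis, together with the tight partner-set bound, to be the main obstacle, as it is precisely where the constant $\frac{1}{2(t+1)(k+1)}$ appearing in $\epsilon'$ is used.
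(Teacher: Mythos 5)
Your proof is correct but takes a genuinely different route from the paper's. The paper iteratively extracts the densest translate of $G$, discards the annulus around it (which is covered by $t$ translates of $G$ per Lemma~\ref{lem:covering}), and recurses on the points outside the annulus; summing over rounds shows the extracted ``cores'' $G_i$ have total size $\geq n/(t+1)$, each of size $< n/(2(t+1)(k+1))$, and these cores are then packed into $k+1$ buckets so that every transversal of the buckets is a witness. You instead build a $(k+1)$-subset greedily, one point at a time, maintaining the invariant that the length-$j$ prefix cannot be covered by $j-1$ translates, and count the freedom available at each step via the partner-set bound $|N(p)|\leq(t+1)s$. The combinatorial heart is the same --- a $(k+1)$-set whose points pairwise cannot share a translate of $G$ cannot be covered by $k$ translates --- but you prove it explicitly by pigeonhole-and-invariant, whereas the paper leaves the corresponding geometric fact (that points in different cores never share a translate, because of the $2+\varepsilon$ scaling) unstated. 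The two arguments also use slightly different shape constants $t$ (yours covers the difference body $G-G$; the paper's covers the annulus between $G$ and $(2+\varepsilon)G$), but both are finite constants depending only on the shape of $G$, so the lemma as stated is proved either way. A minor bonus of your version is that the $\Omega$-constant in the witness count does not involve $t$, and the bookkeeping of the recursion (the sequences $c_i,\epsilon_i$) is avoided entirely.
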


\begin{proof}
 We say a geometric object $G$ is \emph{best} if it encloses the maximum number of points from the given set of 
 $n$ points. Now, we start with such a  best object. Let us say the best object  contains at least $c_0(1-\epsilon)n$ 
points $(\mbox{where~}0<c_0\leq 1)$.  
Now draw an object, $L_G$, concentric and homothetic with respect to $G$, having a scaling factor of $2+\varepsilon$ 
(for $0<\varepsilon\ll1$, see the definition of Homotheticity in Subsection~\ref{subsection:def} where
$v=0$ and $\lambda=2+\varepsilon$).
The annulus obtained by two concentric objects $G$ and $L_G$ can be filled 
 with constant many  (say $t~(=\kappa^d-1)$, see Lemma~\ref{lem:covering}) 
 translated copies of $G$. Since we started with 
 the best object,  the annulus contains at most $tc_0(1-\epsilon)n$ points. Hence, the number of 
 points which are outside  $L_G$ is at least $\epsilon n -tc_0(1-\epsilon)n=\epsilon_1n$, where $\epsilon_1=\epsilon-tc_0(1-\epsilon)$.
 We throw away all the points in the annulus. Now, we are left with best object that
 containing  at least $c_0(1-\epsilon)n$ points and the remaining space containing at least 
 $\epsilon_1n$ points.
 
 Now, we repeat the above process on $\epsilon_1 n$  points and would keep on repeating it until every point is either  
deleted or contained in some translated copies of  $G$. Thus, total number of points that  we have deleted from 
 annuli is at most~ $t\Sigma_{i\geq0} c_i(1-\epsilon_i)n$ and total number of points that are inside translated
copies of $G$ is at least $\Sigma_{i\geq0} c_i(1-\epsilon_i)n ~~(\mbox{where~} \epsilon_0=\epsilon).$
 
By construction, the total number of points inside translated copies of $G$ and the points that have been 
deleted from annuli is at least $n$. Thus, $$\Sigma_{i\geq0}c_i(1-\epsilon_i)n+t\Sigma_{i\geq0}c_i(1-\epsilon_i)n \geq n~~~(\mbox{where}~\epsilon_0=\epsilon).$$
$$\Sigma_{i\geq0}c_i(1-\epsilon_i)n\geq \frac{n}{t+1}.$$  Let $G_i$ denotes the $i$-th geometric object
and $|G_i|$ denotes  the number of points contained in it. Thus, $$\Sigma_{i\geq0}|G_i|\geq \frac{n}{t+1}.$$
By assumption, $k$ translated copies of  $G$ can contain at most $(1-\epsilon)n$ points. \\
Thus, $|G_i|\leq (1-\epsilon)n$. Since $\epsilon>1-\frac{1}{2(t+1)(k+1)}$, 
$$|G_i|<\frac{n}{2(t+1)(k+1)}.$$
Now, our goal is to make $k+1$ buckets, $S_1,S_2,..,S_{k+1}$, from $G_i$'s such that each bucket
contains  at least $\frac{n}{2(t+1)(k+1)}$ points and at most $\frac{n}{(t+1)(k+1)}$ points. We construct
these buckets by adding points from $G_i$'s until its size become at least $\frac{n}{2(t+1)(k+1)}$.
Since each $|G_i|<\frac{n}{2(t+1)(k+1)}$ and $\Sigma_{i\geq0}|G_i|\geq \frac{n}{t+1}$, this construction
is possible. Thus, for a particular bucket $S_i$,
$$ \frac{n}{2(t+1)(k+1)} \leq |S_i|\leq \frac{n}{(t+1)(k+1)}.$$
 Now, choosing one point from each of the $(k+1)$ buckets gives a set of $k+1$ points as a witness, which cannot be contained in $k$ 
 translated copies of $G$. Thus, there are at least  $ {\left(\frac{1}{2(t+1)(k+1)}\right)}^{k+1}n^{k+1}~(=\Omega(n^{k+1}))$ many witnesses. 
\end{proof}

\begin{theorem}\label{theorem:kPiercing}
 Consider $k$ translated copies of a geometric object $G$ and a set of $n$ points in $\real^d$ 
 (for constant $k$ and  $d$).  Then there exist $\epsilon'=\epsilon'(k,t)$ (where $t$ is a 
constant that depends on the shape of the geometric object) such that
 for all $\epsilon\in(\epsilon',1]$,  at least $\epsilon n$ points cannot be contained in any
 $k$ translated copies of $G$. Now, if  we randomly sample 
 $\frac{1}{c}\ln\frac{1}{\delta}$ (where $0<\delta\leq1$ and $cn^{k+1}$ is the number of witnesses, 
 see Lemma~\ref{lem:kPiercing})  many sets of  size $k+1$, then there exists a set in the 
 sample which cannot be contained in any $k$ translated copies of $G$, with probability at least $1-\delta$.
\end{theorem}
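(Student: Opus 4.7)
The plan is to mirror the proof of Theorem~\ref{theorem:testunitball}, with Lemma~\ref{lem:kPiercing} playing the combinatorial role that the fractional Helly theorem (via Lemma~\ref{lem:symConvFracHelly}) played in the one-piercing case. First, I would invoke Lemma~\ref{lem:kPiercing} to assert that, whenever $\epsilon \in (\epsilon'(k,t),1]$ and at least $\epsilon n$ points cannot be covered by $k$ translated copies of $G$, there are at least $cn^{k+1}$ witness subsets of size $k+1$, with $c = \bigl(\tfrac{1}{2(t+1)(k+1)}\bigr)^{k+1}$ a constant depending only on $k$ and the shape parameter $t$.

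Next, I would convert this absolute witness count into a per-sample probability. Since the total number of $(k+1)$-element subsets of $S$ is $\binom{n}{k+1} \le \tfrac{n^{k+1}}{(k+1)!}$, a uniformly random $(k+1)$-subset is a witness with probability at least $c(k+1)! \ge c$. Equivalently, the probability that a single sample fails to be a witness is at most $1-c$. This step is a routine counting estimate and should require no new ideas.

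Finally, I would apply the standard independence bound: the probability that none of the $m := \tfrac{1}{c}\ln\tfrac{1}{\delta}$ independently drawn $(k+1)$-subsets is a witness is at most
\[
(1-c)^{m} \;\le\; e^{-cm} \;=\; \delta,
\]
so with probability at least $1-\delta$ at least one sampled set is a witness, which is exactly the conclusion of the theorem.

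The proof itself is almost entirely a repackaging of the one-piercing analysis, and I do not expect any real obstacle at this stage: the sampling argument is the same Chernoff-style calculation as in Theorem~\ref{theorem:testunitball}. All the geometric difficulty—and the reason the theorem is restricted to the range $\epsilon \in (\epsilon'(k,t),1]$ and to constant $k,d$—has already been absorbed into Lemma~\ref{lem:kPiercing}, whose greedy annulus argument is what produces the $\Omega(n^{k+1})$ witness bound in the first place. Any attempt to widen the admissible range of $\epsilon$ would have to strengthen that lemma (or prove Conjecture~\ref{conj:conjecture}), not alter the present argument.
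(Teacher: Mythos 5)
Your proof is correct and follows essentially the same route as the paper: invoke Lemma~\ref{lem:kPiercing} for the $\Omega(n^{k+1})$ witness count, translate it into a per-sample success probability, and finish with the standard $(1-c)^m \le e^{-cm}$ bound. If anything, your counting step is slightly more careful than the paper's, which informally writes the witness probability as $cn^{k+1}/n^{k+1}$ rather than noting, as you do, that the denominator is $\binom{n}{k+1} \le n^{k+1}/(k+1)!$, so the true probability is at least $c$.
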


\begin{proof}
  By Lemma~\ref{lem:kPiercing}, if at least $\epsilon n$   points cannot be contained
 in any $k$ translated copies of $G$, then there exist at least  $cn^{k+1}$ many witnesses of $k+1$ points  which cannot be 
 contained in any $k$ translated copies of $G$.
  A  set of $k+1$ points cannot be contained in $k$ translated copies of $G$ with probability 
 $\frac{cn^{k+1}} {n^{k+1}}=c$. 
 Hence, the probability that it can be contained in $k$ translated copies of $G$ is  $1-c$.
  Thus, the probability that all the sampled sets can be contained in $k$ translated copies of $G$ is 
   $ \leq (1-c)^{\frac{1}{c}\ln\frac{1}{\delta}} \leq e^{-\ln\frac{1}{\delta}}=\delta.$
\end{proof}

Similar to \emph{tester} for $(1,A)$-cluster testing problem, we present a \emph{tester} 
(Algorithm~\ref{algorithm:algokPiercing}) for problem $(k,G)$-cluster testing.
If all the points can be contained in $k$ translated copies of 
 $G$ then algorithm accepts the input,  else it outputs a witness with probability at least  $1-\delta$.
 Correctness of the algorithm follows from Theorem~\ref{theorem:kPiercing}.
 Thus, similar to testing $(1,A)$-clustering,  this property is also \emph{testable}. But,
the \emph{tester} only works for constant $k$ and $d$ and for $\epsilon\in(\epsilon',1]$ (see Lemma~\ref{lem:kPiercing}).\\

 \begin{algorithm}[H]\label{algorithm:algokPiercing}
 \SetAlgoLined
 \KwData{A set $S$ of $n$ points in $\real^d$ (input is given as black-box), $0<\delta\leq1$ and 
 $\epsilon\in(\epsilon',1]$.}
 \KwResult{Returns a set of $k+1$ points, if it exists, which cannot be contained in $k$ 
 translated copies of $G$, or accepts (\textit{i.e.}, all the points can be contained in it).}
 \Repeat {$\frac{1}{c}\ln\frac{1}{\delta}$ many times}
  {
  select a set (say $W$) of $k+1$ points uniformly at random from $S$\\
 \If{$W$  cannot be contained in $k$  translated copies of $G$}
  {return $W$ as witness}
  }
  \If{no witness found }{return \,\  /*\emph{~all the points can be contained in $k$  translated copies of $G$~}*/}

 \caption{$(k,G)$-cluster testing  in geometric objects}
\end{algorithm}

\begin{lem}\label{lem:covering}
 Let $G$ be a bounded geometric object. Consider another geometric object $L_G$, concentric and homothetic with respect to $G$,  
 having a scaling factor of $2+\varepsilon$ (for $0<\varepsilon\ll1$, see the definition of Homotheticity in Subsection~\ref{subsection:def}).
 Now, the annulus obtained between two concentric objects $G$ and $L_G$ can be covered by $\kappa^d-1$ translated copies of $G$, 
where $\kappa$ is (ceiling of) the ratio of side length of the smallest $d$-cube 
circumscribing $L_G$ to that of the largest $d$-cube (homothetic \textit{w.r.t.} smallest $d$-cube 
circumscribing $L_G$) inscribing $G$. 
\end{lem}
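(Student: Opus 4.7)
The plan is to combine a standard cube-partition covering argument with the observation that the central part of $C_{\mathrm{out}}$ already lies inside $G$, which is what saves one translate.

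First, I would fix coordinates so that both $G$ and its concentric scaling $L_G$ are centered at the origin. Let $C_{\mathrm{out}}$ be the smallest $d$-cube (in some fixed orientation) circumscribing $L_G$, with side length $s_{\mathrm{out}}$, and let $C_{\mathrm{in}}$ be the largest $d$-cube, homothetic to $C_{\mathrm{out}}$, inscribed in $G$, with side length $s_{\mathrm{in}}$. Since $G$ and $L_G$ are concentric with respect to the origin, both $C_{\mathrm{out}}$ and $C_{\mathrm{in}}$ can also be taken centered at the origin. By the definition of the statement, $\kappa=\lceil s_{\mathrm{out}}/s_{\mathrm{in}}\rceil$, so $s_{\mathrm{out}}/\kappa \le s_{\mathrm{in}}$.

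Second, I would partition $C_{\mathrm{out}}$ into a uniform grid of $\kappa^d$ axis-aligned sub-cubes $Q_1,\ldots,Q_{\kappa^d}$, each with side length $s_{\mathrm{out}}/\kappa\le s_{\mathrm{in}}$. Since each $Q_i$ has side length at most that of $C_{\mathrm{in}}$, there is a translation vector $v_i$ such that $Q_i\subseteq v_i+C_{\mathrm{in}}\subseteq v_i+G$ (using $C_{\mathrm{in}}\subseteq G$). Consequently the $\kappa^d$ translates $\{v_i+G\}$ cover $C_{\mathrm{out}}$, and therefore cover $L_G$ and, in particular, the annulus $L_G\setminus G$.

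Third, I would save one copy by locating a sub-cube whose corresponding translate of $G$ contributes nothing to the annulus. For odd $\kappa$, the unique grid sub-cube centered at the origin has side $s_{\mathrm{out}}/\kappa\le s_{\mathrm{in}}$, so it is contained in $C_{\mathrm{in}}\subseteq G$ and lies outside the annulus entirely; the remaining $\kappa^d-1$ translates cover $L_G\setminus G$. For even $\kappa$ the obvious grid has no central sub-cube, and this is the main obstacle: I would resolve it by shifting the grid by $(s_{\mathrm{out}}/(2\kappa),\ldots,s_{\mathrm{out}}/(2\kappa))$ so that one sub-cube becomes centered at the origin, and then use the slack afforded by $\varepsilon>0$ in the scaling factor of $L_G$ to absorb the small overhang at the boundary of $C_{\mathrm{out}}$ (possibly by letting one boundary sub-cube extend slightly, while still fitting inside a translate of $C_{\mathrm{in}}$ and hence of $G$). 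Either way, exactly one of the $\kappa^d$ cubes in the grid is contained in $G$, so $\kappa^d-1$ translated copies of $G$ suffice to cover the annulus.
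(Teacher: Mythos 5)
Your approach is essentially the paper's: normalize $C_{\mathrm{out}}$ (the paper's $C_{L_G}$) to a $\kappa\times\cdots\times\kappa$ grid of unit sub-cubes, note that each sub-cube fits in a translate of $C_{\mathrm{in}}\subseteq G$, cover $C_{\mathrm{out}}\supseteq L_G$ with $\kappa^d$ translates, and then discard the one whose translate is already contained in $G$ to get $\kappa^d-1$ for the annulus. Where you differ is in noticing a genuine gap that the paper glosses over: the ``save one'' step requires a grid sub-cube to sit entirely inside $G$, and for even $\kappa$ the integer grid has no sub-cube centered at the origin, so the reduction from $\kappa^d$ to $\kappa^d-1$ is not automatic. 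That observation is correct and to your credit.

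However, the patch you propose for even $\kappa$ does not clearly close the gap. Shifting the grid by $s_{\mathrm{out}}/(2\kappa)$ in each coordinate produces a $\kappa^d$-cube grid that overhangs $C_{\mathrm{out}}$ by $s_{\mathrm{out}}/(2\kappa)$ on one face and fails to cover a slab of the same width on the opposite face; to cover the missing slab by stretching a boundary sub-cube, you would need a cell of side $\approx 3s_{\mathrm{out}}/(2\kappa)$ to still fit inside $C_{\mathrm{in}}$, i.e.\ $3s_{\mathrm{out}}/(2\kappa)\le s_{\mathrm{in}}$, which is false in general since $\kappa=\lceil s_{\mathrm{out}}/s_{\mathrm{in}}\rceil$. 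The $\varepsilon$ in the scaling factor of $L_G$ also does not help: the ceiling in the definition of $\kappa$ already absorbs arbitrary slack, and $\varepsilon$ gives no control over the parity of $\kappa$. So the even-$\kappa$ case remains unresolved by your argument --- but the paper's own proof has exactly the same hole (it simply asserts the $-1$ without addressing parity). Since the lemma is used in Lemma~\ref{lem:kPiercing} only to supply a constant $t$, and replacing $\kappa^d-1$ by $\kappa^d$ changes nothing downstream, the gap is cosmetic; still, a rigorous version would either state the bound as $\kappa^d$ or restrict to the odd-$\kappa$ case where the center sub-cube exists.
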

\begin{proof}
 Let $C_{L_G}$ be the smallest $d$-cube circumscribing $L_G$ and $C_G$ be  
 the largest $d$-cube (homothetic \textit{w.r.t.} $C_{L_G}$) inscribing $G$. Let 
 $C_{L_G}=[0,\kappa]^d$.  Now, consider the $d$-dimensional grid of $C_{L_G}$ obtained 
 by points whose coordinates  are from the set $\{0,1,2,3,..,\kappa\}$. One translated 
 copy of $C_G$ would be require to cover each of the unit $d$-cube from the $d$-dimensional 
 grid, and hence  $\kappa^d$ translated copies of $C_G$ would require to cover $C_{L_G}$. 
 Thus, a covering by $\kappa^d-1$ translated copies of $C_G$ would be required to cover the annulus
 between $C_{L_G}$ and $C_G$.

Now, in order to get a bound for geometric objects, in the above covering,  we can replace 
$C_{L_G}$ by $L_G$ and $C_G$ by $G$. Clearly, the cube covering bound would be an upper bound 
for geometric object covering.
\end{proof}

\section{Application in Clustering with Outliers}\label{section:outliers}
While considering the clustering problem, we mostly assume that data 
is perfectly  clusterable. But a few random points (outliers, noise)
 could be added in the data by an adversary. For example, in the  $k$-center 
clustering, if an adversary adds a  point in the data which is very 
far  from the original set of well clustered points, then in the optimum 
solution that point  becomes center of its own cluster and the remaining points
 are forced to  clustered with  $(k-1)$ centers only. Also, it is even difficult 
to locate when a point  becomes an outlier. For example: consider a set of 
points where we need to  find its optimal $k$-center clustering. Take a point 
from that set and keep  moving it far from the remaining set. Now, it is very 
difficult to locate  correctly at which place that point becomes center of its 
own cluster and  the remaining points are left with $(k-1)$-center clusters.

  In this work, we consider clustering with outliers by  ignoring some fraction
 of points. Thus, in the case when points are perfectly clusterable, ignoring
 some fraction of points does not affect the result too much, and the case
 when outliers are present, the algorithm has the ability to ignore them
 while computing the final clusters.  It may seems that the ability to ignore
 some fraction of points makes the problem easier, but on the contrary it
 does not. Because it has not only to decide which point to include in the
 cluster but also to decide which point to include first. There may be two extreme
 approaches to solve this problem: 1) Decide which points are outliers and run
 the clustering algorithm; 2) Do not ignore any points, and after getting
 final clusters decide which ones are outliers. Unfortunately, neither of
 these two approaches works well.  The first one scales poorly because there are 
 exponentially  many choices, and the second one may significantly change 
 the final  outcome when outliers are indeed present.  This motivates the study 
 of clustering with outliers (see\cite{outlier}).

Theorem~\ref{theorem:testunitball} has an application to $1$-center clustering with outliers. 
More precisely, for  $0<\epsilon,\delta\leq 1$, when we have the ability to ignore  at least
 $\epsilon n$ points as outliers, we present a randomized algorithm which takes a 
constant size sample from input  and correctly  output the radius and center of 
the \textit{approximate} cluster with  probability at least $1-\delta$. 

\begin{algorithm}[H]\label{algorithm:cluster}
 \SetAlgoLined
 \KwData{A set $S$ of $n$ points in $\real^d$ (input is given as black-box), $0<\epsilon,\delta\leq1$.}
          
 \KwResult{Report center and radius of cluster which contain all but at most  $\epsilon n$  points.}
 
 Uniformly and independently, select $m=\frac{d+1}{\epsilon^{d+1}}\ln\frac{1}{\delta}$ points from $S$.\\
 
 Compute minimum enclosing ball containing all the sample points and report its center and radius.\\
  
 \caption{$1$-center clustering with outliers}
\end{algorithm}

\begin{theorem}
 Given a set of $n$ points in $\real^d$ and  $0<\epsilon,\delta\leq1$, Algorithm~\ref{algorithm:cluster} 
 correctly outputs,  with probability at least  $1-\delta$, a ball containing all but 
 at most $\epsilon n$ points in constant   time  by querying a constant  size sample (constant depending
 on $d$ and $\epsilon$). Moreover, if $r_{outlier}$ is the smallest ball containing all but at most $\epsilon n$ points and $r_{min}$ 
 is the smallest  ball containing all the points, then Algorithm~\ref{algorithm:cluster} outputs 
 the radius $r$ such that $r_{outlier}\leq r\leq r_{min}$.
 \end{theorem}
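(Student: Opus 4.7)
The theorem makes two claims about the output ball $B$ of Algorithm~\ref{algorithm:cluster}: (i) $B$ contains at least $(1-\epsilon)n$ points of $S$, and (ii) its radius $r$ satisfies $r_{outlier}\le r\le r_{min}$. The upper bound $r\le r_{min}$ is immediate, since the sample is a subset of $S$ and the minimum enclosing ball of a subset has radius at most that of the MEB of the full set, namely $r_{min}$. The inequality $r\ge r_{outlier}$ follows directly from (i), since by definition $r_{outlier}$ is the smallest radius of any ball containing $\ge(1-\epsilon)n$ points of $S$. So the real work is to prove (i) with probability $\ge 1-\delta$.

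The plan is to imitate the proof of Theorem~\ref{theorem:testunitball}, applied to a ball of radius slightly less than $r_{outlier}$. Fix $r^*<r_{outlier}$. By the definition of $r_{outlier}$, every ball of radius $r^*$ contains fewer than $(1-\epsilon)n$ points of $S$, hence misses more than $\epsilon n$ points. Lemma~\ref{lem:symConvFracHelly} applied with $A$ a ball of radius $r^*$ then gives that at least an $\epsilon^{d+1}$-fraction of all $(d+1)$-subsets of $S$ fail to fit in any ball of radius $r^*$. View the $m=(d+1)\ell$ sampled points, where $\ell:=\frac{1}{\epsilon^{d+1}}\ln\frac{1}{\delta}$, as $\ell$ mutually independent groups, each a uniformly random $(d+1)$-tuple from $S$. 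Each group is a ``witness'' (fails to fit in any ball of radius $r^*$) with probability $\ge\epsilon^{d+1}$, so the probability that no group is a witness is at most $(1-\epsilon^{d+1})^{\ell}\le e^{-\ln(1/\delta)}=\delta$. In the complementary event, some group $T\subseteq $ sample is a witness, so the MEB of $T$---and hence $B$, which contains $T$---has radius strictly greater than $r^*$. Letting $r^*\uparrow r_{outlier}$ yields $r\ge r_{outlier}$ with probability $\ge 1-\delta$.

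The hardest step is upgrading this radius lower bound to the covering claim (i). Having $r\ge r_{outlier}$ does not by itself force the specific ball $B$ to contain $\ge(1-\epsilon)n$ points of $S$, since in principle $B$ could be large in radius yet poorly centered. The plan is to argue that, within the same high-probability event, the sample is an $\epsilon$-net for the family of exteriors of balls, so that every ball containing the sample---in particular $B$ itself---covers $\ge(1-\epsilon)n$ points of $S$. A naive union bound over the $O(n^{d+1})$ combinatorially distinct ball-subsets of $S$ would give such an $\epsilon$-net property, but only with a sample size depending on $n$; the crucial step is to replace that union bound by a Helly-type witness argument which, for every candidate ``bad'' ball $B^\prime$, exhibits a $(d+1)$-subset of the sample that cannot lie inside $B^\prime$. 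Making this uniform over all such $B^\prime$ while keeping the sample size at $\tfrac{d+1}{\epsilon^{d+1}}\ln(1/\delta)$, independently of $n$, is the main obstacle in formalizing the proof, and is precisely where Helly's theorem does the essential work.
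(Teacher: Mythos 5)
Your concern about the covering claim is well-founded, and it locates a real soft spot in the paper's own proof. The paper's proof of this theorem is a single sentence: it cites Theorem~\ref{theorem:testunitball} and asserts that if a sample of size $m$ is contained in a ball of radius $r$, then \emph{this ball} would contain all but at most $\epsilon n$ points with probability at least $1-\delta$. But Theorem~\ref{theorem:testunitball} (via Lemma~\ref{lem:symConvFracHelly}) only yields an existence statement: its contrapositive says that if no sampled $(d+1)$-subset is a witness for a ball of radius $r$, then, with probability at least $1-\delta$, \emph{some} translate of that ball covers at least $(1-\epsilon)n$ points of $S$. It does not identify that translate with the particular minimum enclosing ball $B$ that Algorithm~\ref{algorithm:cluster} returns, which is exactly the poorly-centered objection you raise. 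Your treatment of the radius inequalities matches what the paper dismisses as ``Clearly, $r_{outlier}\leq r\leq r_{min}$'': $r\leq r_{min}$ by monotonicity of the MEB under inclusion, and $r\geq r_{outlier}$ by the witness argument applied to any $r^*<r_{outlier}$, the limit $r^*\uparrow r_{outlier}$ being legitimate since the bad events are nested in $r^*$.

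What neither you nor the paper closes is the jump from ``there exists a ball of radius $r$ covering $(1-\epsilon)n$ points'' to ``the specific output ball $B$ covers $(1-\epsilon)n$ points.'' The standard fix is precisely the $\epsilon$-net argument you allude to, but note that it does \emph{not} cost an $n$-dependent sample: by the Haussler--Welzl theorem, a random sample of size $O\bigl(\frac{d}{\epsilon}\log\frac{1}{\epsilon}+\frac{1}{\epsilon}\log\frac{1}{\delta}\bigr)$ is with probability $1-\delta$ an $\epsilon$-net for the range space of ball-complements (VC dimension $O(d)$), and then any ball containing the whole sample --- $B$ in particular --- misses at most $\epsilon n$ points. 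The paper's sample size $\frac{d+1}{\epsilon^{d+1}}\ln\frac{1}{\delta}$ dominates this, so the theorem is in fact true, but that route is exactly the VC machinery of Alon \textit{et al.} that the paper advertises itself as avoiding. In short, you have correctly located where the argument breaks; the paper does not supply a Helly-based step that closes it, and its one-line appeal to Theorem~\ref{theorem:testunitball} does not by itself establish the covering property for the specific output ball.
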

 \begin{proof}
From Theorem~\ref{theorem:testunitball}, if a sample of size $m$ is contained in a ball of radius 
$r$, then this ball would contain all but at most $\epsilon n$ points, with probability 
at least $1-\delta$. And we compute the value of $r$ in step $2$  using Algorithm of~\cite{NIMROD},
which takes  $\theta(m)$ time. Thus, both the sample size and  the running time of the algorithm are constant. 
Clearly, $r_{outlier}\leq r\leq r_{min}$.
\end{proof}

The problem of clustering with outliers can be generalized for $k$-center clustering. 
If Conjecture~\ref{conj:conjecture} is true, then it has an application to $k$-center clustering 
with outliers.  For given $0<\epsilon,\delta\leq 1$, ignoring at least $\epsilon n$
 points as outliers, we present a randomized algorithm which takes a constant size sample 
 from the input and  correctly  output the radii and $k$ centers of the \textit{approximate} 
 clusters with  probability at least $1-\delta$.

\begin{algorithm}[H]\label{algorithm:k-cluster}
 \SetAlgoLined
 \KwData{A set $S$ of $n$ points in $\real^d$ (input is given as black-box), $0<\epsilon,\delta\leq1$.}
 \KwResult{Reports $k$ centers and radii of  clusters which contains all but at most 
          $\epsilon n$  points.}

 Uniformly and independently, select $m=\frac{k(d+1)}{\gamma(\beta(\epsilon,k,d))}\ln\frac{1}{\delta}$ points from $S$.\\
 
 Compute $k$ minimum enclosing balls containing all the sample points and report their centers and radii.\\

 \caption{$k$-center clustering with outliers}
\end{algorithm}

\begin{theorem}\label{theorem:test_k_unitballs}
 Consider a set of $n$ points in $\real^d$. If Conjecture~\ref{conj:conjecture} is true and 
  $0<\epsilon,\delta\leq1$, then with probability at least 
 $1-\delta$, Algorithm~\ref{algorithm:k-cluster} output $k$ balls containing all but at most 
 $\epsilon n$ points in constant   time  by querying a constant  size sample (constant depending
 on $k$, $d$ and $\epsilon$). Moreover, for $1\leq i\leq k$, if $r^{(i)}_{outlier}$ is the radius of the optimal $i$-th cluster 
by ignoring at most $\epsilon n$ points as outliers and $r^{(i)}_{min}$ is the radius of the 
optimal $i$-th cluster when all points are present, then Algorithm~\ref{algorithm:k-cluster} 
outputs the radius  $r^{(i)}$ such that $r^{(i)}_{outlier}\leq r^{(i)}\leq r^{(i)}_{min}$.
 \end{theorem}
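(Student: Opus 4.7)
The plan is to run an argument structurally identical to the proof of the $k=1$ analogue (the theorem following Algorithm~\ref{algorithm:cluster}), substituting Lemma~\ref{lem:conj} for Lemma~\ref{lem:symConvFracHelly} and $k(d+1)$-subsets for $(d+1)$-subsets throughout. Since Conjecture~\ref{conj:conjecture} is assumed, Lemma~\ref{lem:conj} is available and plays precisely the role that fractional Helly played in the single-cluster case.

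First I would verify the covering guarantee. Let $B^{(1)},\ldots,B^{(k)}$ be the $k$ minimum enclosing balls produced in Step~2, with radii $r^{(1)},\ldots,r^{(k)}$. Assume for contradiction that more than $\epsilon n$ input points lie outside $B^{(1)}\cup\cdots\cup B^{(k)}$. Viewing these balls as a family of $k$ symmetric convex bodies, Lemma~\ref{lem:conj} yields that at least a $\gamma(\beta(\epsilon,k,d))$ fraction of all $k(d+1)$-element subsets of $S$ cannot be enclosed in any $k$ translated copies of balls of radii $r^{(1)},\ldots,r^{(k)}$. The same geometric-series estimate as in Theorem~\ref{theorem:testunitball} then shows that a uniform random sample of size $m=\frac{k(d+1)}{\gamma(\beta(\epsilon,k,d))}\ln\frac{1}{\delta}$ contains such a bad $k(d+1)$-subset with probability at least $1-\delta$. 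But the whole sample sits inside $B^{(1)}\cup\cdots\cup B^{(k)}$ by construction, so every one of its subsets is coverable, which is the desired contradiction.

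Next I would establish the sandwich bound $r^{(i)}_{outlier}\leq r^{(i)}\leq r^{(i)}_{min}$. The upper bound is immediate from subset-monotonicity of the minimum enclosing ball: the sample is a subset of $S$, so the enclosing ball of each cluster of the sample cannot exceed the enclosing ball of the corresponding $S$-cluster in an optimum outlier-free $k$-clustering. The lower bound follows from the previous step, because $(B^{(1)},\ldots,B^{(k)})$ is itself a valid outlier-permitting $k$-clustering (it misses at most $\epsilon n$ points), so no $r^{(i)}$ can beat the optimal outlier-permitting radius $r^{(i)}_{outlier}$. The running time and query complexity claims are then immediate, since $m$ depends only on $k,d,\epsilon,\delta$ and Step~2 runs a standard minimum enclosing ball routine on $m$ points.

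The main obstacle is making the contrapositive step fully rigorous: the radii $r^{(i)}$ are random variables determined by the sample, so Lemma~\ref{lem:conj} is being invoked on a family of convex bodies that is revealed only after sampling. The cleanest way to handle this is to argue directly about the fixed optimum outlier-permitting radii $r^{(i)}_{outlier}$, so that the event ``some $k(d+1)$-subset is bad'' is defined with respect to a deterministic family, and then to transfer the resulting covering guarantee to the larger output radii $r^{(i)}\geq r^{(i)}_{outlier}$ by monotonicity. This sidesteps the subtle coupling between the sample and the radii and is morally the same trick implicit in the $k=1$ proof.
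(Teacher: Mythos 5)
Your proposal follows the same skeleton as the paper's proof: invoke Lemma~\ref{lem:conj} (which is where Conjecture~\ref{conj:conjecture} enters), run the sampling calculation from Theorem~\ref{theorem:testunitball} with $k(d+1)$ in place of $d+1$ and $\gamma(\beta(\epsilon,k,d))$ in place of $\epsilon^{d+1}$, and cite the cited $k$-center algorithms for Step~2's running time. The paper is terser than you are, but the ingredients and their order are identical, so this is essentially the same argument.

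You do something the paper does not: you explicitly flag that the convex bodies handed to Lemma~\ref{lem:conj} are revealed only after the sample is drawn, and you propose a repair. That observation is correct and worth making, but the repair as you state it is circular. You want to establish the covering guarantee by applying the lemma to a deterministic family and then ``transferring'' to the output balls via $r^{(i)}\geq r^{(i)}_{outlier}$; but in your own argument $r^{(i)}\geq r^{(i)}_{outlier}$ is derived from the covering guarantee (the output balls miss at most $\epsilon n$ points, hence are a feasible outlier-permitting clustering), so the transfer step presupposes what it is meant to prove. A non-circular version would first show $r^{(i)}\geq r^{(i)}_{outlier}$ directly — apply the contrapositive of the testing bound to balls of radius strictly below $r^{(i)}_{outlier}$, which is a fixed family — but that only guarantees that \emph{some} $k$ balls with the output radii cover most of $S$, not that the particular balls returned in Step~2 do; closing that last gap needs an additional argument (e.g., conditioning on the support points of each sample cluster). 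Two smaller points you and the paper both gloss over: the individual-coordinate monotonicity $r^{(i)}\leq r^{(i)}_{min}$ is delicate for $k>1$ because the matching of sample clusters to $S$-clusters is not canonical (only $\max_i r^{(i)}$ is obviously monotone under taking subsets), and Lemma~\ref{lem:conj} is stated for $k$ translates of a \emph{single} symmetric convex body, whereas the $k$ minimum enclosing balls in Step~2 generally have distinct radii.
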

 \begin{proof}
If Conjecture~\ref{conj:conjecture} is true, then from Lemma~\ref{lem:conj},  if at least $\epsilon$
 fraction of points cannot be contained in $k$ translated copies of a symmetric convex body $A$, then
 at least $\gamma(\beta(\epsilon,k,d))$ fraction of $n\choose k(d+1)$  sets of size $k(d+1)$ cannot be contained in $k$ translated 
 copies of $A$. Here, $\gamma$ is an appropriately chosen function to compute the value of $1-\alpha$.
 Now, similar to $1$-center clustering with outliers, a sample of size $m=\frac{k(d+1)}{\gamma(\beta(\epsilon,k,d))}\ln\frac{1}{\delta}$
 would be sufficient.
 
 Step $2$ of Algorithm~\ref{algorithm:k-cluster} can be computed in time $O(m^{kd+2})$ using algorithm of Agrawal \textit{et al.}
(see Section $7.1$ of~\cite{kclustering} for details). For relatively smaller values of $d$, we 
can use the Algorithm of~\cite{kclustering1} to get a better running time 
$(m^{O({f(d).k^{1-\frac{1}{d}}})}$, where $f(d)$ is always bounded by $O(d^{\frac{5}{2}}))$.
Thus,  the sample size as well as the running time of the algorithm are constant. Clearly, $r^{(i)}_{outlier}\leq r^{(i)}\leq r^{(i)}_{min}$.
\end{proof}

\section{Conclusion and Open Problems}\label{section:conclusion}
In this paper, we initiated an application of the  Helly (and \textit{Helly-type}) theorem in property 
testing. For $(1,A)$-cluster testing in a symmetric convex body $A$, we showed that  testing can be done with 
constant number of queries and hence proved that the property is \textit{testable}. Alon \textit{et al.} 
\cite{alon} also solved  a similar problem with constant number of queries, using 
combination of  sophisticated arguments in geometric and probabilistic analysis.
 For $1$-center clustering, our result had an  incomparable query complexity in relation 
 (in terms of number of queries depending on  $\epsilon$) with the result of Alon \textit{et al}. We stated a conjecture related to
 fractional \textit{Helly-type} theorem for more than one piercing of convex bodies. Using a 
 greedy approach, we proved a weaker version of the conjecture which we used for testing $(k,G)$-clustering.
 We also gave a characterization of the type of  symmetric convex body for which Helly-type result for more that one piercing would be true.
 Finally, as an application of testing result in clustering with outliers, we showed 
that one can find, with high probability, the \textit{approximate} clusters by querying a
constant size sample.


\bibliographystyle{plain}
\bibliography{reference}

\end{document}